\tikzset{
modal/.style={>=stealth’,shorten >=1pt,shorten <=1pt,auto,node distance=1.5cm,
semithick},
world/.style={circle,draw,minimum size=0.5cm,fill=gray!15},
point/.style={circle,draw,inner sep=0.5mm,fill=black},
reflexive above/.style={->,loop,looseness=7,in=120,out=60},
reflexive below/.style={->,loop,looseness=20,in=240,out=0},
reflexive left/.style={->,loop,looseness=7,in=150,out=210},
reflexive right/.style={->,loop,looseness=7,in=30,out=330}}
\providecommand{\customgenericname}{}
\newcommand{\newcustomtheorem}[2]{%
  \newenvironment{#1}[1]
  {%
   \renewcommand\customgenericname{#2}%
   \renewcommand\theinnercustomgeneric{##1}%
   \innercustomgeneric
  }
  {\endinnercustomgeneric}
}
\newcommand{\cB}{\mathcal{B}}
\newcommand{\cT}{\mathcal{T}}
\newcommand{\mysetminusD}{\hbox{\tikz{\draw[line width=0.6pt,line cap=round] (3pt,0) -- (0,6pt);}}}
\newcommand{\mysetminusT}{\mysetminusD}
\newcommand{\mysetminusS}{\hbox{\tikz{\draw[line width=0.45pt,line cap=round] (2pt,0) -- (0,4pt);}}}
\newcommand{\mysetminusSS}{\hbox{\tikz{\draw[line width=0.4pt,line cap=round] (1.5pt,0) -- (0,3pt);}}}
\newcommand{\mysetminus}{\mathbin{\mathchoice{\mysetminusD}{\mysetminusT}{\mysetminusS}{\mysetminusSS}}}
\newcommand{\draft}[1]{{\color{red}[\textsc{#1}]}}
\newcommand{\defin}[1]{\textbf{#1}}
\newcommand{\lthen}{\rightarrow}
\newcommand{\liff}{\leftrightarrow}
\newcommand{\verum}{\top}
\newcommand{\proves}{\vdash}
\newcommand{\defeq}{\coloneqq}
\newcommand{\val}[1]{[\![ #1 ]\!]}
\newcommand{\aval}[1]{[\kern-0.25em( #1 )\kern-0.25em]}
\renewcommand{\phi}{\varphi}
\newcommand{\rimp}{\Rightarrow}
\newcommand{\dimp}{\Leftrightarrow}
\newcommand{\commentout}[1]{}
\newcommand{\X}{\mathcal{X}}
\renewcommand{\L}{\mathcal{L}}
\renewcommand{\int}{\mathit{int}}
\newcommand{\cl}{\mathit{cl}}
\newcommand{\amods}{\mathrel{\kern.2em|\kern-0.2em{\approx}\kern.2em}}
\newcommand{\notamods}{\mathrel{\kern.2em|\kern-0.2em{\not\approx}\kern.2em}}
\newcommand{\fullv}[1]{#1}
\newcommand{\shortv}[1]{}
\newcommand{\Next}{\Circle}
\newcommand{\<}{\langle}
\renewcommand{\>}{\rangle}
\newcommand{\nNext}[1]{\<#1\>}
\newcommand{\pto}{\rightharpoonup}
\title{The Epistemology of Nondeterminism}
\titlerunning{The Epistemology of Nondeterminism}
\author{Adam Bjorndahl}
\institute{Carnegie Mellon University\\\email{abjorn@andrew.cmu.edu}}
\begin{document}

\maketitle

\begin{abstract}
This paper proposes new semantics for nondeterministic program execution, replacing the standard relational semantics for propositional dynamic logic (PDL). Under these new semantics, program execution is represented as fundamentally deterministic (i.e., functional), while nondeterminism emerges as an epistemic relationship between the agent and the system: intuitively, the nondeterministic outcomes of a given process are precisely those that cannot be ruled out in advance. We formalize these notions using topology and the framework of dynamic topological logic (DTL) \cite{KM05}. We show that DTL can be used to interpret the language of PDL in a manner that captures the intuition above, and moreover that continuous functions in this setting correspond exactly to deterministic processes. We also prove that certain axiomatizations of PDL remain sound and complete with respect to the corresponding classes of dynamic topological models. Finally, we extend the framework to incorporate knowledge using the machinery of subset space logic \cite{DMP96}, and show that the topological interpretation of public announcements as given in \cite{Bjorndahl18} coincides exactly with a natural interpretation of test programs.
\end{abstract}

%introduction
\section{Introduction} \label{sec:int}

\emph{Propositional dynamic logic} (PDL) is a framework for reasoning about the effects of \emph{nondeterministic programs} (or, more generally, \emph{nondeterministic actions}). The standard models for PDL are relational structures interpreted as state transition diagrams: each program $\pi$ is associated with a binary relation $R_{\pi}$ on the state space, and $xR_{\pi}y$ means that state $y$ is one possible result of executing $\pi$ in $x$.\footnote{See \cite{TB15} for an overview of this branch of modal logic.}
\fullv{
\vspace{-5mm}
\begin{figure}[h]
\centering
\includegraphics[scale=0.1]{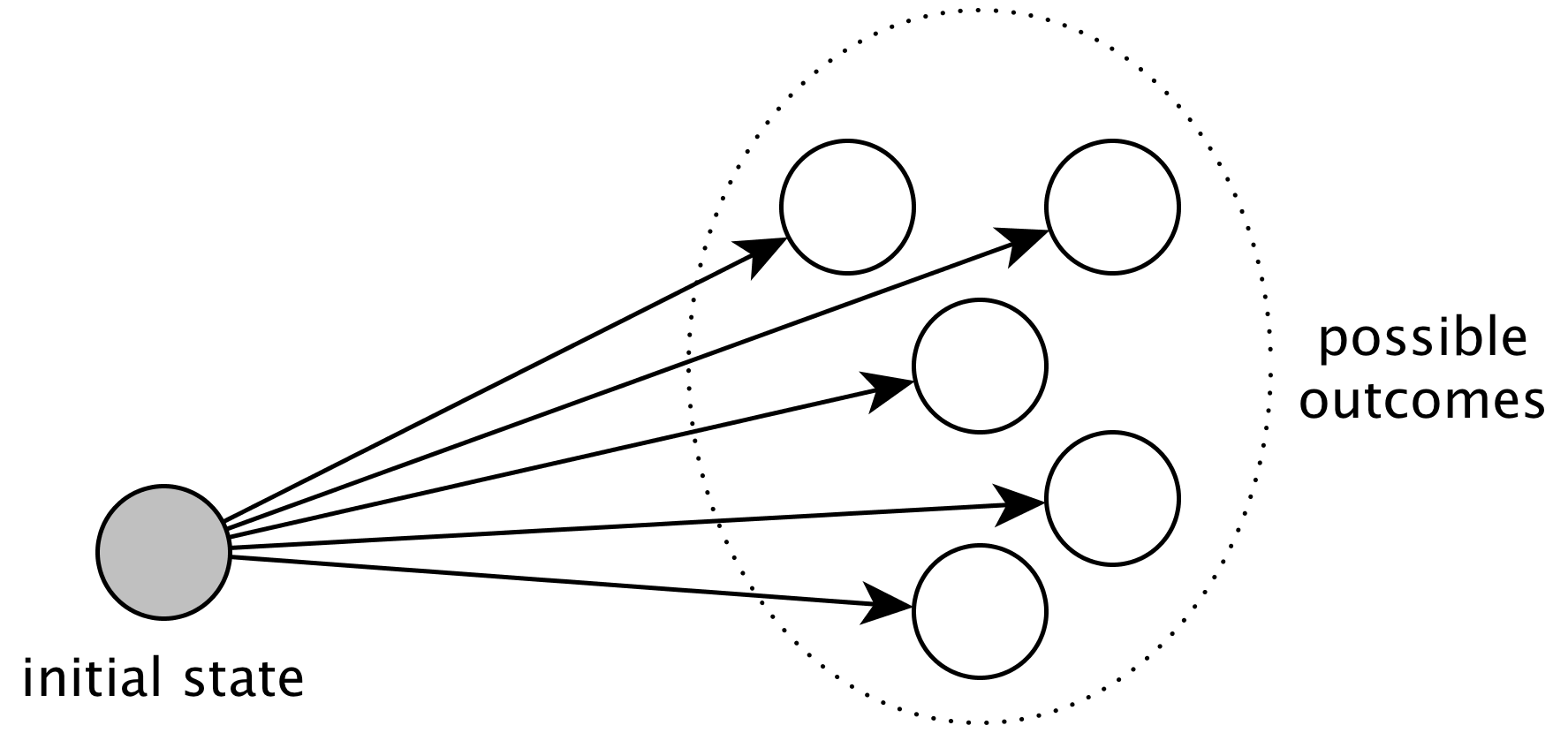}
\end{figure}
}

What is the sense of ``possibility'' at play here? This paper explores an epistemic account. The standard models for PDL treat nondeterminism as a primitive, unanalyzed notion: effectively, for each state $x$, $\pi$ is interpreted as nondeterministic at $x$ just in case $|\{y \: : \: xR_{\pi}y\}| > 1$. But one might hope for a logic that provides some insight into the \textit{nature} and \textit{source} of nondeterminism, rather than simply stipulating its existence.
%Also, more narrowly, one might be uncomfortable with a semantics in which states---typically taken to encode \textit{all the relevant features of a situation}---systematically underdetermine the outcomes of certain actions.

We investigate a richer class of models for nondeterministic program execution which differ from the standard models in two key respects: (1) states completely determine the effects of actions, and (2) nondeterminism emerges, loosely speaking, as a kind of epistemic relationship between a given agent (or collection of agents) and the program (or action) in question. As we argue in the next section, to make this relationship precise we need structures rich enough to represent \emph{potential observations}; for this we make use of \emph{topology}.
%as in \cite{Bjorndahl18,BO17}.
The resulting framework is very closely related to \emph{dynamic topological logic} (DTL) as developed by Kremer and Mints \cite{KM05}; roughly speaking, we show that DTL embeds a faithful interpretation of PDL. Furthermore, we demonstrate that \emph{continuity} in this setting coincides exactly with the notion of determinism: that is, \textit{determinism is continuity in the observation topology}.

The rest of the paper is organized as follows. In Section \ref{sec:iam} we review the basics of PDL and present the intuitions that motivate the development of our new models and the importance of ``potential observations'' in the epistemic interpretation of nondeterminism. In Section \ref{sec:tdd} we motivate and review the use of topology for this purpose, and connect it to dynamic topological logic. This provides the tools we need to formalize our epistemic conception of nondeterminism and establish the correspondence between determinism and continuity mentioned above. In Section \ref{sec:amt} we transform PDL models into DTL models in a manner that preserves the truth value of all PDL formulas, and use this to prove that certain standard PDL axiomatizations are also sound and complete with respect to corresponding classes of DTL models. In Section \ref{sec:kal} we enrich our semantics using the machinery of \emph{subset space logic} \cite{DMP96} in order to reason simultaneously about both knowledge and knowability in the context of nondeterministic program execution; furthermore, we show how \emph{public announcements} \cite{Plaza07}, appropriately generalized to the topological setting \cite{Bjorndahl18}, can be captured using \emph{test programs}. Section \ref{sec:con} concludes with a brief discussion of ongoing work. Proofs and other details are collected in Appendix \ref{app:prf}.

%review and motivation
\section{Review and Motivation} \label{sec:iam}

Fix a countable set of \emph{primitive propositions} $\textsc{prop}$ and a countable set of \emph{programs} $\Pi$. The language of PDL, denoted $\L_{PDL}$, is given by
$$\phi ::= p \, | \, \lnot \phi \, | \, \phi \land \psi \, | \, \nNext{\pi} \phi,$$
where $p \in \textsc{prop}$, $\pi \in \Pi$, and $\nNext{\pi} \phi$ is read, ``after some execution of $\pi$, $\phi$ is true''. Often $\Pi$ is constructed from a set of more ``basic'' programs by closing under certain operations, but for the moment we will take it for granted as a structureless set. A \defin{(standard) PDL model} is a relational frame $(X, (R_{\pi})_{\pi \in \Pi})$ together with a valuation $v: \textsc{prop} \to 2^{X}$; Boolean formulas are interpreted in the usual way, while $\nNext{\pi}$ is interpreted as existential quantification over the $R_{\pi}$-accessible states:
\begin{eqnarray*}
x \models p & \textrm{ iff } & x \in v(p)\\
x \models \lnot \phi & \textrm{ iff } & x \not\models \phi\\
x \models \phi \land \psi & \textrm{ iff } & x \models \phi \textrm{ and } x \models \psi\\
x \models \nNext{\pi} \phi & \textrm{ iff } & (\exists y)(x R_\pi y \textrm{ and } y \models \phi).
\end{eqnarray*}
Thus, $\nNext{\pi}\phi$ is true at a state $x$ just in case some possible execution of $\pi$ at $x$ results in a $\phi$-state.

Following standard conventions, we write $[\pi]$ as an abbreviation for $\lnot \nNext{\pi} \lnot$, so we have
\begin{eqnarray*}
x \models [\pi] \phi & \textrm{ iff } & (\forall y)(x R_\pi y \textrm{ implies } y \models \phi).
\end{eqnarray*}
We also treat $R_{\pi}$ as a set-valued function when convenient, with $R_{\pi}(x) = \{y \in X \: : \: xR_{\pi}y\}$.

It is easy to adjust the standard models for PDL so that each state completely determines the outcome of each action: simply replace the relations $R_{\pi}$ with functions $f_{\pi}: X \to X$. To emphasize this shift we introduce modalities $\Next_\pi$ into the language, reading $\Next_\pi \phi$ as ``after execution of $\pi$, $\phi$ holds''; these modalities are interpreted using the functions $f_\pi$ in the natural way:
\begin{eqnarray*}
x \models \Next_\pi \phi & \textrm{ iff } & f_\pi(x) \models \phi.
\end{eqnarray*}

Perhaps the most direct attempt to formalize nondeterminism as an epistemic notion in this setting is to interpret the ``nondeterministic outcomes'' of $\pi$ to be precisely those outcomes that the agent considers possible.
\fullv{
\begin{figure}[h]
\centering
\includegraphics[scale=0.1]{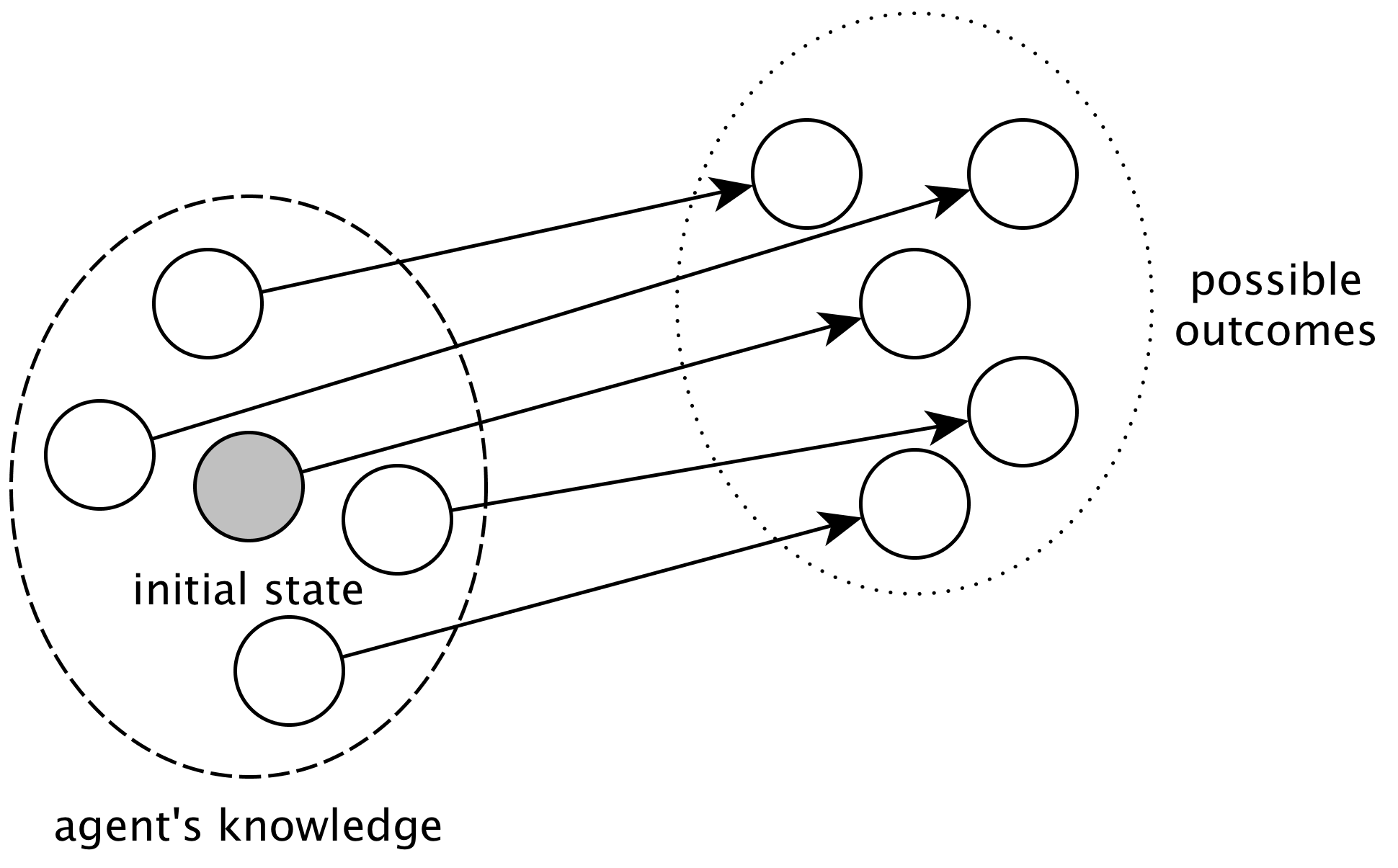}
\end{figure}
}

Somewhat more formally, supposing we have access to a knowledge modality $K$ with corresponding dual $\hat{K}$ (so $\hat{K}\phi$ is read ``the agent considers $\phi$ possible''), we might define
\begin{eqnarray*}
\nNext{\pi} \phi & \equiv & \hat{K} \Next_{\pi} \phi.
\end{eqnarray*}

Crucially, however, this seems to miss the essence of nondeterminism. For instance, according to this definition, when the agent in question happens to be very uncertain about $\pi$, we are forced to interpret $\pi$ as having a great many possible nondeterministic outcomes. But there is a clear conceptual distinction between those outcomes of $\pi$ that are possible as far as some agent knows---perhaps an agent with very poor information---as opposed to those outcomes that would remain possible \textit{even with good information}. And it seems to be the latter concept that aligns more closely with our intuitions regarding nondeterminism.

For a simple example, imagine running a random number generator. This seems like a canonical example of a nondeterministic process. Note that what is important here is not merely that you do not, in fact, know what number will be generated in advance, but also that you are unable \textit{in principle} to determine this in advance.\footnote{To be sure, if you had access to a more advanced set of tools than are standardly available, perhaps you \textit{could} make such a determination. And in this case, thinking of the random number generator as a nondeterministic process loses much of its intuitive appeal. Indeed, \textit{any} nondeterministic process whatsoever might be viewed as deterministic relative to a sufficiently powerful set of tools (e.g., from God's perspective). Thus, nondeterminism can be naturally construed as a \textit{relative} notion that depends on a fixed background set of ``feasible measurements''. We make this precise below.} By contrast, imagine running a program that queries a given database and prints the result; we would not want to call this program nondeterministic even if you happened to be ignorant about the contents of the database.

This is a distinction we want to respect. The relevant epistemic notion, then, is not what any given agent currently happens to know, but what they \textit{could come to know}. This is where topology comes in: the notion of ``potential knowledge'' or ``knowability'' is naturally represented in topological spaces.

%topology and dynamics
\section{Topology, Dynamics, and Determinism} \label{sec:tdd}

%topological spaces and models
\subsection{Topological Spaces and Models}

A \defin{topological space} is a set $X$ together with a collection $\cT \subseteq 2^{X}$ of subsets of $X$ such that $\emptyset, X \in \cT$ and $\cT$ is closed under unions and finite intersections. Elements of $\cT$ are called \emph{open} and $\cT$ is called the \emph{topology}.

There are various intuitions that help to make sense of this definition, most of which tap into the notion of topology as the mathematics of physical space and proximity.\footnote{For a standard introduction to topological notions, we refer the reader to \cite{munkres}.} Here, though, we focus instead on epistemic intuitions, through which topology is naturally interpreted as a formalization of evidence and potential observations. In fact, these two intuitions overlap in cases where the relevant observations are measurements about locations in space.

Informally, if we think of $X$ as a set of possible worlds encoding the potential uncertainties one may have, then we can think of open sets $U \in \cT$ as the results of measurements or observations. More precisely, we can understand $U$ to represent the observation that rules out precisely those worlds $x \notin U$. On this view, each $U \in \cT$ corresponds to a possible state of knowledge, and the topology $\cT$ itself can be conceptualized as the set of available observations.%
\footnote{Suppose, for a simple example, that you measure your height and obtain a reading of $180 \pm 2$ cm. If we represent the space of your possible heights using the positive real numbers, $\mathbb{R}^{+}$, then it is natural to identify this measurement with the open interval $(178, 182)$. And with this measurement in hand, you can safely deduce that you are, for instance, less than 183 cm tall, while remaining uncertain about whether you are, say, taller than 179 cm.}

A core notion in topology is that of the \emph{interior} of a set $A \subseteq X$, defined by:
$$\int(A) = \{x \in A \: : \: (\exists U \in \cT)(x \in U \subseteq A)\}.$$
The interior of $A$ therefore consists of those points $x$ that are ``robustly'' inside $A$, in the sense that there is some ``witness'' $U \in \cT$ to $x$'s membership in $A$. When we interpret elements of $\cT$ as the results of possible measurements, the notion of interior takes on a natural epistemic interpretation: $x$ lies in the interior of $A$ just in case there is \textit{some} measurement one could potentially take that would entail $A$. In other words, the worlds in the interior of $A$ are precisely the worlds where $A$ \textit{could come to be known}.\footnote{One might wonder about the closure conditions on topologies. Finite intersections can perhaps be accounted for by identifying them with sequences of measurements, but what about unions? One intuition comes by observing that for any set $A$, $\int(A) = \bigcup \{U \in \cT \: : \: U \subseteq A\}$, so $\int(A)$ is the information state that arises from learning \textit{that} $A$ is true without learning \textit{what particular measurement} was taken to ascertain this fact. This idea is formalized in \cite{Bjorndahl18} using public announcements; we direct the reader to this work for a more detailed discussion of this point.}

The dual of the interior operator is called \emph{closure}:
\begin{eqnarray*}
\cl(A) & = & X \mysetminus \int(X \mysetminus A)\\
& = & \{x \in X \: : \: (\forall U \in \cT)(x \in U \rimp U \cap A \neq \emptyset)\}.
\end{eqnarray*}
Thus, epistemically speaking, worlds in the closure of $A$ are precisely those worlds in which $A$ is compatible with \textit{every possible measurement}. The closure operator therefore offers a mathematical realization of our intuition about nondeterminism: namely, that a nondeterministic outcome of a program is one that remains possible no matter how good the agent's state of information.

A \defin{topological model} $M$ is a topological space $(X,\cT)$ together with a \emph{valuation} $v: \textsc{prop} \to 2^{X}$. In such models we interpret the basic modal language $\L_{\Box}$ defined by
$$\phi ::= p \, | \, \lnot \phi \, | \, \phi \land \psi \, | \, \Box \phi,$$
where $p \in \textsc{prop}$, via the usual recursive clauses for the Boolean connectives together with the following addition:
\begin{eqnarray*}
x \models \Box \phi & \textrm{ iff } & x \in \int(\val{\phi}),
\end{eqnarray*}
where $\val{\phi} = \{x \in X \: : \: x \models \phi\}$. We also make use of the dual modality $\Diamond$, defined by
\begin{eqnarray*}
x \models \Diamond \phi & \textrm{ iff } & x \in \cl(\val{\phi}).
\end{eqnarray*}
Following the discussion above, we read $\Box \phi$ as ``$\phi$ is knowable'' or ``$\phi$ can be ascertained'' and $\Diamond \phi$ as ``$\phi$ is unfalsifiable'' or ``$\phi$ cannot be ruled out''.

%dynamic topological models
\subsection{Dynamic Topological Models} \label{sec:dtm}

Kremer and Mints \cite{KM05} introduce the notion of a \emph{dynamic topological model}, which is simply a topological model equipped with a continuous function $f: X \to X$. Since we wish to capture the execution of a multitude of programs, we generalize this notion slightly to topological models equipped with a family of functions, one for each program $\pi \in \Pi$. Moreover, continuity is not something we will want to take for granted; we therefore drop this requirement as well.

A \defin{dynamic topological model} is a tuple $(X, \cT, \{f_{\pi}\}_{\pi \in \Pi}, v)$ where $(X,\cT,v)$ is a topological model and each $f_{\pi}: X \to X$. In such models we can interpret the language $\L_{\Box, \scriptsize{\Next}}$ defined by
$$\phi ::= p \, | \, \lnot \phi \, | \, \phi \land \psi \, | \, \Box \phi \, | \, \Next_{\pi} \phi,$$
where $p \in \textsc{prop}$ and $\pi \in \Pi$, via the additional semantic clause:
\begin{eqnarray*}
x \models \Next_\pi \phi & \textrm{ iff } & f_\pi(x) \models \phi.
\end{eqnarray*}
This provides the final tool we need to formalize the re-interpretation of nondeterministic program execution sketched in Section \ref{sec:iam}:
$$\nNext{\pi} \phi \equiv \Diamond \Next_\pi \phi.$$
Semantically:
\begin{eqnarray*}
x \models \nNext{\pi}\phi & \textrm{iff} & x \models \Diamond \Next_{\pi} \phi\\
%& \textrm{iff} & x \in \cl(\val{\Next_{\pi}\phi}))\\
& \textrm{iff} & x \in \cl(f_{\pi}^{-1}(\val{\phi})).
\end{eqnarray*}
So: $\phi$ is a nondeterministic outcome of $\pi$ (at $x$) just in case it cannot be ruled out (at $x$) that $\phi$ will hold after $\pi$ is executed. Topologically: every measurement at $x$ (i.e., every open neighbourhood of $x$) is compatible with a state where $\phi$ results from executing $\pi$. Call this the \emph{epistemic interpretation of nondeterminism}.

%determinism as continuity
\subsection{Determinism as Continuity}

The epistemic interpretation of nondeterminism accords with our earlier intuitions about random number generators and database queries. Consider a process $rand$ that randomly displays either a $0$ or a $1$, and an agent who (we presume) is unable to measure in advance the relevant quantities that determine the output of this process. This means that both $\Next_{rand}0$ and $\Next_{rand}1$ are compatible with every measurement the agent can take (in advance of running the process), so $\Diamond \Next_{rand}0$ and $\Diamond \Next_{rand}1$ both hold, i.e., $\nNext{rand} 0 \land \nNext{rand} 1$. By contrast, consider a process $query$ that outputs the next entry in a given database and an agent who can look up that entry in advance (which is, say, $0$). This means there is a measurement that guarantees $\Next_{query} 0$, so $\Box \Next_{query} 0$ holds, which yields $[query] 0$.

What exactly is (non)determinism in this setting? It is tempting to describe a (non)deterministic process as one in which the output state can(not) be determined in advance. But this is far too liberal: in principle, states may encode many details that are far beyond the ability of the agent to measure precisely, which would then trivially render every process nondeterministic. For instance, if the state description encodes the current temperature of some internal components of the system (e.g., as a seed value for the $rand$ process), then even \textit{after} executing a simple program like $query$ the user will not be in a position to know the true state.

The correct notion is more subtle. Consider again the $rand$ process: the crucial feature of this program is not that it produces a \textit{state} that cannot be determined in advance, but that it produces a \textit{measurable quantity}---namely, the number displayed---that cannot be determined in advance. 

To make the same point slightly more formally: it may be that no measurement at $x$ rules out \textit{all} the other states (indeed, this will be the case whenever $\{x\}$ is not open). This is necessary but \textit{not sufficient} for nondeterminism because it may still be possible to learn (in advance) everything there is to know about the effects of executing $\pi$ (as describable in the language).

This suggests the following refined account of determinism: a deterministic process is one in which everything \textit{one could learn} about the state of the system \textit{after the program is executed} one can also determine in advance of the program execution. 

This account aligns perfectly with the topological notion of \emph{continuity}. Intuitively: a function is continuous if small changes in the input produce small changes in the output. Topologically: $f$ is \defin{continuous at $x$} if, for every open neighbourhood $V$ of $f(x)$, there exists an open neighbourhood $U$ of $x$ such that $f(U) \subseteq V$. And, finally, epistemically: $f$ is continuous at $x$ if every measurement $V$ compatible with the output state $f(x)$ can be guaranteed in advance by some measurement $U$ compatible with the input state $x$. So the definition of continuity corresponds exactly to our refined account of determinism. In other words: \textit{determinism is continuity in the observation topology}.

Continuity of $f_{\pi}$ can be \emph{defined} in the object language $\L_{\Box, \scriptsize{\Next}}$ by the formula
$$\Next_{\pi} \Box \phi \lthen \Box \Next_{\pi} \phi.\footnote{This claim is made precise and proved in Appendix \ref{app:cha}.}$$
Unsurprisingly, this is precisely the scheme that Kremer and Mints call ``the axiom of continuity'' in their axiomatization of the class of (continuous) dynamic topological models \cite{KM05}.
It reads: ``If, \textit{after} executing $\pi$, $\phi$ is (not only true, but also) measurably true, then it is possible to take a measurement \textit{before} executing $\pi$ that guarantees that $\phi$ will be true after executing $\pi$.'' So this scheme expresses the idea that one need not actually execute $\pi$ in order to determine whatever could be determined after its execution: all the measurable effects of $\pi$ can be determined in advance. Again, this is determinism. Continuity is determinism.

%axiomatization and model transformation
\section{Axiomatization and Model Transformation} \label{sec:amt}

\commentout{
The most basic version of PDL (without any operations on programs) is axiomatized by the axioms and rules of inference given in Table \ref{tbl:pdl0}.
\begin{table}[htp]
\begin{center}
\begin{tabularx}{\textwidth}{>{\hsize=.6\hsize}X>{\hsize=1.3\hsize}X>{\hsize=1.1\hsize}X}
\toprule
(CPL) & all propositional tautologies & Classical propositional logic\\
(K$_{\pi}$) & $[\pi](\phi \lthen \psi) \lthen ([\pi]\phi \lthen [\pi]\psi)$
 & Distribution\\
(MP) & from $\phi$ and $\phi \lthen \psi$ deduce $\psi$ & Modus ponens\\
(Nec$_{\pi}$) & from $\phi$ deduce $[\pi]\phi$ & Necessitation\\
\bottomrule
\end{tabularx}
\end{center}
\caption{Axioms and rules of inference for \textsf{PDL$_{0}$}}\label{tbl:pdl0}
\end{table}%
Call this system \textsf{PDL$_{0}$}.
\begin{theorem} \label{thm:pdl}
\textsf{PDL$_{0}$} is a sound and complete axiomatization of the language $\L_{PDL}$ with respect to the class of all PDL models.\footnote{This can be proved using very standard techniques; see, e.g., \cite{BRV01}.}
\end{theorem}

%serial models
\subsection{Serial models}
}

We restrict our attention in this section to \emph{serial} PDL models, in which each $R_{\pi}$ is serial: $(\forall x)(\exists y)(xR_{\pi}y)$. Thus, we rule out the possibility of a program $\pi$ producing no output at all in some states (intuitively, ``crashing''), which corresponds to the fact that the functions in dynamic topological models are assumed to be total (i.e., everywhere defined). This allows for a cleaner translation between the two paradigms; in Section \ref{sec:kal} we consider a framework that drops this assumption.

%The seriality constraint is captured by the following axiom scheme:
%$$\textrm{(D$_{\pi}$)} \qquad [\pi] \phi \lthen \nNext{\pi}\phi.$$
%Let \textsf{SPDL$_{0}$} denote \textsf{PDL$_{0}$} + (D$_{\pi}$).

\begin{table}[h]
\caption{Axioms and rules of inference for \textsf{SPDL$_{0}$}}\label{tbl:pdl0}
\begin{tabularx}{\textwidth}{>{\hsize=.6\hsize}X>{\hsize=1.3\hsize}X>{\hsize=1.1\hsize}X}
\toprule
(CPL) & all propositional tautologies & Classical propositional logic\\
(K$_{\pi}$) & $[\pi](\phi \lthen \psi) \lthen ([\pi]\phi \lthen [\pi]\psi)$
 & Distribution\\
(D$_{\pi}$) & $[\pi] \phi \lthen \nNext{\pi}\phi$ & Seriality\\
(MP) & from $\phi$ and $\phi \lthen \psi$ deduce $\psi$ & Modus ponens\\
(Nec$_{\pi}$) & from $\phi$ deduce $[\pi]\phi$ & Necessitation\\
\bottomrule
\end{tabularx}
\end{table}%
%\vspace{-5mm}

The most basic version of serial PDL (without any operations on programs) is axiomatized by the axioms and rules of inference given in Table \ref{tbl:pdl0}. Call this system \textsf{SPDL$_{0}$}.

\begin{theorem} \label{thm:spdl}
\textsf{SPDL$_{0}$} is a sound and complete axiomatization of the language $\L_{PDL}$ with respect to the class of all serial PDL models.\footnote{This can be proved using very standard techniques; see, e.g., \cite{BRV01}.}
\end{theorem}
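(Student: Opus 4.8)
The plan is to recognize \textsf{SPDL$_{0}$} as nothing more than the multimodal normal logic \textsf{KD}, with one box $[\pi]$ for each $\pi \in \Pi$, and to prove soundness and completeness by the standard canonical model method. Soundness is a routine induction on derivations: I would check that each axiom is valid and each rule preserves validity over serial PDL models. The only clause requiring any thought is (D$_{\pi}$): if $x \models [\pi]\phi$ then $\phi$ holds at every $R_{\pi}$-successor of $x$, and seriality guarantees at least one such successor exists, so $x \models \nNext{\pi}\phi$. The distribution axiom (K$_{\pi}$) and necessitation (Nec$_{\pi}$) hold because $[\pi]$ is interpreted as a universal quantifier over $R_{\pi}(x)$, and (CPL) and (MP) are immediate.

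For completeness I would build the canonical model $M^{c} = (X^{c}, (R^{c}_{\pi})_{\pi \in \Pi}, v^{c})$ in the usual way. The points $X^{c}$ are the maximal \textsf{SPDL$_{0}$}-consistent sets of $\L_{PDL}$-formulas; the valuation is $v^{c}(p) = \{\Gamma : p \in \Gamma\}$; and the canonical accessibility relations are defined by
$$\Gamma \mathrel{R^{c}_{\pi}} \Delta \quad \text{iff} \quad \{\phi : [\pi]\phi \in \Gamma\} \subseteq \Delta.$$
The core of the argument is the Truth Lemma, $\Gamma \models \phi \iff \phi \in \Gamma$, proved by induction on the structure of $\phi$. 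The Boolean cases are standard, and the modal case reduces, via the usual Existence Lemma (if $[\pi]\phi \notin \Gamma$ then there is a maximal consistent $\Delta \supseteq \{\psi : [\pi]\psi \in \Gamma\} \cup \{\lnot\phi\}$, obtained by Lindenbaum's lemma), to verifying that $\nNext{\pi}\phi \in \Gamma$ iff some $R^{c}_{\pi}$-successor of $\Gamma$ contains $\phi$.

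The one step that genuinely uses seriality---and thus the main (mild) obstacle---is showing that every $R^{c}_{\pi}$ is serial, i.e., that each $\Gamma \in X^{c}$ has at least one $R^{c}_{\pi}$-successor. By Lindenbaum this amounts to showing the set $\Gamma_{\pi} \defeq \{\phi : [\pi]\phi \in \Gamma\}$ is consistent. Here (D$_{\pi}$) does the work: instantiating it at $\phi = \falsum$ gives $\proves [\pi]\falsum \lthen \nNext{\pi}\falsum$, and since $\nNext{\pi}\falsum$ is equivalent to $\lnot[\pi]\verum$ while $[\pi]\verum$ is a theorem (by Nec$_{\pi}$ from $\verum$), we obtain $\proves \lnot[\pi]\falsum$; hence $[\pi]\falsum \notin \Gamma$. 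Were $\Gamma_{\pi}$ inconsistent, there would be finitely many $\phi_{1},\dots,\phi_{n} \in \Gamma_{\pi}$ with $\proves (\phi_{1}\land\cdots\land\phi_{n}) \lthen \falsum$; applying Nec$_{\pi}$ and (K$_{\pi}$), together with the fact that each $[\pi]\phi_{i} \in \Gamma$, would force $[\pi]\falsum \in \Gamma$, a contradiction. Thus $\Gamma_{\pi}$ is consistent, extends to some $\Delta \in X^{c}$, and $\Gamma \mathrel{R^{c}_{\pi}} \Delta$, establishing seriality. Consequently $M^{c}$ is a serial PDL model, and if $\phi$ is not a theorem then $\lnot\phi$ is consistent, so some $\Gamma \in X^{c}$ contains $\lnot\phi$ and by the Truth Lemma refutes $\phi$; this yields completeness. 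All of these are entirely standard techniques (cf.\ \cite{BRV01}), so I would present the seriality verification in full and relegate the remaining routine inductions to a reference.
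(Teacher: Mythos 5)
Your proof is correct and is exactly the argument the paper intends: the paper offers no proof of this theorem beyond the footnote deferring to standard techniques (citing Blackburn et al.), and your identification of \textsf{SPDL$_{0}$} as multimodal \textsf{KD} together with the canonical model construction and the seriality argument via (D$_{\pi}$) is precisely that standard route.
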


Using the epistemic interpretation of nondeterminism given in Section \ref{sec:dtm}, we can also interpret the language $\L_{PDL}$ directly in dynamic topological models:
\begin{eqnarray*}
x \models \nNext{\pi}\phi & \textrm{iff} & x \in \cl(f_{\pi}^{-1}(\val{\phi})).
\end{eqnarray*}
And, dually:
\begin{eqnarray*}
x \models [\pi]\phi & \textrm{iff} & x \in \int(f_{\pi}^{-1}(\val{\phi})).
\end{eqnarray*}
This puts us in a position to evaluate our re-interpretation in a precise way. Namely, we can ask: are all the properties of nondeterministic program execution that are captured by standard (serial) PDL models preserved under this new interpretation? And we can answer in the affirmative:
\begin{theorem} \label{thm:spdl2}
\textsf{SPDL$_{0}$} is a sound and complete axiomatization of the language $\L_{PDL}$ with respect to the class of all dynamic topological models.
\end{theorem}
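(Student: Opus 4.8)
The plan is to deduce this from the completeness of \textsf{SPDL$_0$} with respect to serial PDL models (Theorem~\ref{thm:spdl}) by means of a truth-preserving transformation of relational models into dynamic topological models, while checking soundness directly. The guiding observation is that under the epistemic interpretation the PDL modalities factor as $[\pi] = \Box\Next_\pi$ and $\nNext{\pi} = \Diamond\Next_\pi$, where $\Box$ is the interior (hence an $S4$, and in particular normal) modality and $\Next_\pi$ is the modality induced by a \emph{total} function, so that $f_\pi^{-1}$ commutes with all Boolean operations.

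For soundness, I would verify that each axiom of Table~\ref{tbl:pdl0} is valid in every dynamic topological model, using throughout that totality of $f_\pi$ makes $\Next_\pi$ a function modality for which $f_\pi^{-1}$ commutes with all Boolean operations and which legitimises the two factorisations above. (CPL) and (MP) are immediate. For (K$_\pi$), writing $A = f_\pi^{-1}(\val{\phi})$ and $B = f_\pi^{-1}(\val{\psi})$, Boolean commutation gives $f_\pi^{-1}(\val{\phi\lthen\psi}) = (X\mysetminus A)\cup B$, and then $\int((X\mysetminus A)\cup B)\cap\int(A) = \int(((X\mysetminus A)\cup B)\cap A)\subseteq\int(B)$ by distribution of $\int$ over finite intersections and monotonicity. (D$_\pi$) reduces to the always-valid inclusion $\int(A)\subseteq\cl(A)$, since $\int(A)\subseteq A\subseteq\cl(A)$. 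Finally, (Nec$_\pi$) holds because a valid $\phi$ has $\val{\phi} = X$, whence $f_\pi^{-1}(\val{\phi}) = X$ and $\int(X) = X$.

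For completeness, given $\not\proves\phi$, Theorem~\ref{thm:spdl} supplies a serial PDL model $M = (X,(R_\pi),v)$ and a state refuting $\phi$. I would build a dynamic topological model $M'$ as follows. Let $X'$ be the set of all $h\colon\Pi^{*}\to X$ with $h(\sigma)\mathrel{R_\pi}h(\sigma\pi)$ for all $\sigma\in\Pi^{*}$ and $\pi\in\Pi$ (the full $\Pi$-branching unraveling; seriality guarantees $X'$ is nonempty over each $x\in X$). Define the shift $f_\pi(h) = h_\pi$ with $h_\pi(\sigma) = h(\pi\sigma)$; this is total and well defined, and $f_\pi(h)(\epsilon) = h(\pi)$ is an $R_\pi$-successor of $h(\epsilon)$. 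Put $v'(p) = \{h : h(\epsilon)\in v(p)\}$, and equip $X'$ with the ``root'' topology $\cT' = \{U_S : S\subseteq X\}$, where $U_S = \{h : h(\epsilon)\in S\}$; this is a clopen-generated topology whose closure operator is $\cl(A) = U_{S_A}$ with $S_A = \{h(\epsilon) : h\in A\}$. The target is the truth lemma $\val{\phi}_{M'} = \{h : h(\epsilon)\models_M\phi\}$ for all $\phi\in\L_{PDL}$, proved by induction; since $\L_{PDL}$ has only $\nNext{\pi}$ as a primitive modality, the atomic and Boolean clauses are routine and only the modal clause requires work.

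The crux is the modal step, which is where I expect the only real difficulty to lie: choosing the topology so that topological closure reproduces existential quantification over $R_\pi$-successors, uniformly under nesting of modalities. By the induction hypothesis $\val{\phi}_{M'} = U_{\val{\phi}_M}$, so $f_\pi^{-1}(\val{\phi}_{M'}) = \{h : h(\pi)\models_M\phi\}$ and hence $\cl(f_\pi^{-1}(\val{\phi}_{M'})) = U_S$ with $S = \{h(\epsilon) : h(\pi)\models_M\phi\}$. It then remains to check $S = \{x : (\exists y)(xR_\pi y\text{ and }y\models_M\phi)\} = \val{\nNext{\pi}\phi}_M$: the inclusion $\subseteq$ is immediate since $h(\pi)$ is an $R_\pi$-successor of $h(\epsilon)$, while for $\supseteq$, given $xR_\pi y$ with $y\models_M\phi$ I use seriality to extend the partial assignment $\epsilon\mapsto x$, $\pi\mapsto y$ to a total $h\in X'$, witnessing $x\in S$. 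This yields $\val{\nNext{\pi}\phi}_{M'} = U_{\val{\nNext{\pi}\phi}_M}$, preserving the inductive form, and any $h$ with $h(\epsilon)$ the refuting state then refutes $\phi$ in $M'$. The conceptual content of the construction is exactly the discussion of Section~\ref{sec:tdd}: the root value is fully knowable while the deeper choices are unfalsifiably free, so that closure ``reaches'' every nondeterministic successor.
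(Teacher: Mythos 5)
Your proposal is correct and follows essentially the same route as the paper: identical soundness verifications, and for completeness the same unraveling of a serial PDL model into the space of edge-respecting maps $\Pi^{*}\to X$ with the shift functions, the root-generated partition topology, the root-based valuation, and a truth lemma whose modal step uses minimal neighbourhoods plus seriality to extend partial paths to total ones. The only cosmetic difference is that you state the truth lemma set-theoretically ($\val{\phi}_{M'}=U_{\val{\phi}_M}$) and make the closure operator and the seriality-based extension explicit, where the paper phrases these pointwise.
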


\begin{proof}
Soundness of (CPL) and (MP) is immediate. Soundness of (Nec$_{\pi}$) follows from the fact that $f_{\pi}^{-1}(X) = X$ and $\int(X) = X$, and soundness of (D$_{\pi}$) follows from the fact that, for all $A \subseteq X$, $\int(A) \subseteq \cl(A)$. Finally, to see that (K$_{\pi}$) is sound, observe that
\begin{eqnarray*}
\int(f_{\pi}^{-1}(\val{\phi \lthen \psi})) \cap \int(f_{\pi}^{-1}(\val{\phi})) & = & \int(f_{\pi}^{-1}(\val{\phi \lthen \psi}) \cap f_{\pi}^{-1}(\val{\phi}))\\
& \subseteq & \int(f_{\pi}^{-1}(\val{\psi})).
\end{eqnarray*}
The proof of completeness proceeds by way of a model-transformation construction we provide in Appendix \ref{app:tra}: specifically, we show that every serial PDL model can be transformed into a dynamic topological model in a manner that preserves the truth of all formulas in $\L_{PDL}$ (Proposition \ref{pro:tra}). By Theorem \ref{thm:spdl}, every non-theorem of $\mathsf{SPDL_{0}}$ is refuted on some serial PDL model, so our transformation produces a dynamic topological model that refutes the same formula, thereby establishing completeness. \qed
\end{proof}

%sequencing
%\subsection{Sequencing}

Typically one works with richer versions of PDL in which the set of programs $\Pi$ is equipped with one or more operations corresponding, intuitively, to ways of building new programs from old programs. Standard examples include:
\begin{itemize}
\item
\emph{Sequencing}: $\pi_{1};\pi_{2}$ executes $\pi_{1}$ followed immediately by $\pi_{2}$.
\item
\emph{Nondeterministic union}: $\pi_{1} \cup \pi_{2}$ nondeterministically chooses to execute either $\pi_{1}$ or $\pi_{2}$.
\item
\emph{Iteration}: $\pi^{*}$ repeatedly executes $\pi$ some nondeterministic finite number of times.
\end{itemize}
Can we make sense of these operations in our enriched epistemic setting? The latter two transform deterministic programs into nondeterministic programs, and for this reason they are difficult to interpret in a setting where program execution is fundamentally deterministic (i.e., interpreted by functions). We return to discuss this point in Section \ref{sec:con}. Sequencing, on the other hand, does not have this issue; one might guess that it is straightforwardly captured by the condition
$$f_{\pi_{1};\pi_{2}} = f_{\pi_2} \circ  f_{\pi_1}.$$

While function composition certainly seems like the natural way to interpret sequential program execution, there is a wrinkle in the axiomatization.
PDL with sequencing is axiomatized by including the following axiom scheme:
$$\textrm{(Seq)} \qquad \nNext{\pi_{1};\pi_{2}} \phi \liff \nNext{\pi_{1}}\nNext{\pi_{2}} \phi.$$
Interestingly, this scheme is \textit{not} valid in arbitary dynamic topological models. This is because
$$\val{\nNext{\pi_{1};\pi_{2}}\phi} = cl(f_{\pi_{1};\pi_{2}}^{-1}(\val{\phi})) = cl(f_{\pi_{1}}^{-1}(f_{\pi_{2}}^{-1}(\val{\phi}))),$$
whereas
$$\val{\nNext{\pi_{1}}\nNext{\pi_{2}} \phi} = cl(f_{\pi_{1}}^{-1}(cl(f_{\pi_{2}}^{-1}(\val{\phi}))));$$
the extra closure operator means we have
$$\val{\nNext{\pi_{1};\pi_{2}}\phi} \subseteq \val{\nNext{\pi_{1}}\nNext{\pi_{2}} \phi}$$
but not, in general, equality.

A function $f: X \to Y$ is called \defin{open} if for every open subset $U \subseteq X$, the set $f(U)$ is open in $Y$. It turns out that when the function $f_{\pi_{1}}$ is open, the mismatch above vanishes (all of the following claims are proved in Appendix \ref{app:seq}):

\begin{lemma} \label{lem:seq}
Let $(X, \cT, \{f_{\pi}\}_{\pi \in \Pi}, v)$ be a dynamic topological model. If $f_{\pi_{1}}$ is open, then
$$\val{\nNext{\pi_{1};\pi_{2}}\phi} = \val{\nNext{\pi_{1}}\nNext{\pi_{2}} \phi}.$$
\end{lemma}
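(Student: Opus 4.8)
The plan is to prove the two inclusions separately. Writing $A \defeq f_{\pi_2}^{-1}(\val{\phi})$, the semantic clauses recalled just before the lemma give $\val{\nNext{\pi_1;\pi_2}\phi} = \cl(f_{\pi_1}^{-1}(A))$ and $\val{\nNext{\pi_1}\nNext{\pi_2}\phi} = \cl(f_{\pi_1}^{-1}(\cl(A)))$. The inclusion $\cl(f_{\pi_1}^{-1}(A)) \subseteq \cl(f_{\pi_1}^{-1}(\cl(A)))$ holds in \emph{any} dynamic topological model, since $A \subseteq \cl(A)$ and both preimage and closure are monotone; this is the direction already observed in the discussion preceding the lemma, and I would simply invoke it. All the content lies in the reverse inclusion, which is where openness of $f_{\pi_1}$ is used.

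So the whole problem reduces to the purely topological claim that, when $f \defeq f_{\pi_1}$ is open, $\cl(f^{-1}(\cl(A))) \subseteq \cl(f^{-1}(A))$ for every $A \subseteq X$. The key step --- and the only place openness enters --- is the sub-claim that openness of $f$ yields $f^{-1}(\cl(A)) \subseteq \cl(f^{-1}(A))$. I would prove this pointwise. Fix $x$ with $f(x) \in \cl(A)$ and let $U$ be an arbitrary open neighbourhood of $x$. Since $f$ is open, $f(U)$ is open, and it contains $f(x)$ (as $x \in U$), so it is an open neighbourhood of $f(x)$; because $f(x) \in \cl(A)$, this neighbourhood meets $A$. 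Choosing a witness $a \in f(U) \cap A$ and a preimage $y \in U$ with $f(y) = a$ gives a point $y \in U \cap f^{-1}(A)$. As $U$ was arbitrary, $x \in \cl(f^{-1}(A))$, establishing the sub-claim.

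Finally I would apply the closure operator to this sub-claim and use monotonicity together with idempotence of $\cl$: from $f^{-1}(\cl(A)) \subseteq \cl(f^{-1}(A))$ we obtain $\cl(f^{-1}(\cl(A))) \subseteq \cl(\cl(f^{-1}(A))) = \cl(f^{-1}(A))$, which is exactly the reverse inclusion needed. Combining the two inclusions yields the desired equality of truth sets.

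I expect the pointwise sub-claim relating openness to the commutation of preimage and closure to be the crux: it is the sole point at which the hypothesis is consumed, and the argument hinges on the right choice of neighbourhood --- taking the open image $f(U)$ of a neighbourhood of $x$, rather than trying to pull back a neighbourhood of $f(x)$ directly. Everything surrounding it --- the reduction via $A$, the easy inclusion, and the final monotonicity/idempotence step --- is routine.
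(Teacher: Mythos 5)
Your proof is correct and takes essentially the same approach as the paper: both prove only the nontrivial inclusion $\val{\nNext{\pi_{1}}\nNext{\pi_{2}} \phi} \subseteq \val{\nNext{\pi_{1};\pi_{2}}\phi}$, and both consume openness at exactly the same point, by passing from an open neighbourhood $U$ of $x$ to the open set $f_{\pi_{1}}(U)$, which must meet $f_{\pi_{2}}^{-1}(\val{\phi})$ because it meets its closure. The only difference is organizational: you factor the argument through the standard sub-lemma $f^{-1}(\cl(A)) \subseteq \cl(f^{-1}(A))$ for open maps and finish with monotonicity and idempotence of $\cl$, whereas the paper runs the same witnessing chain in a single pass.
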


Say that a dynamic topological model is \emph{open} if each $f_{\pi}$ is open.
\begin{theorem} \label{thm:seq}
\textsf{SPDL$_{0}$} + (Seq) is a sound and complete axiomatization of the language $\L_{PDL}$ with respect to the class of all open dynamic topological models.
\end{theorem}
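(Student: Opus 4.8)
The plan is to follow the same two-part template as the proof of Theorem~\ref{thm:spdl2}, adding the ingredients needed to handle sequencing. Throughout I adopt the standing convention of this section that sequencing is interpreted by composition, i.e.\ $f_{\pi_1;\pi_2} = f_{\pi_2}\circ f_{\pi_1}$, so that an \emph{open} dynamic topological model is one of this form in which, additionally, every $f_\pi$ is open. Soundness is then almost immediate: every axiom and rule of \textsf{SPDL$_0$} is sound over all dynamic topological models by Theorem~\ref{thm:spdl2}, hence \emph{a fortiori} over the subclass of open models; and for (Seq), openness of $f_{\pi_1}$ lets Lemma~\ref{lem:seq} deliver $\val{\nNext{\pi_1;\pi_2}\phi} = \val{\nNext{\pi_1}\nNext{\pi_2}\phi}$ outright, which is precisely the validity of the biconditional (Seq).

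For completeness I would first establish the corresponding \emph{relational} completeness: \textsf{SPDL$_0$} + (Seq) is sound and complete with respect to the class of serial PDL models in which $R_{\pi_1;\pi_2} = R_{\pi_1};R_{\pi_2}$ (relational composition). This is a routine strengthening of Theorem~\ref{thm:spdl}. Each half of (Seq) is a Sahlqvist formula --- $\nNext{\pi_1;\pi_2}p \lthen \nNext{\pi_1}\nNext{\pi_2}p$ and its converse --- whose first-order correspondent is one of the two inclusions comprising $R_{\pi_1;\pi_2} = R_{\pi_1};R_{\pi_2}$; the axioms are therefore canonical, so the canonical model of \textsf{SPDL$_0$} + (Seq) is a serial, composition-respecting PDL model. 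Thus any non-theorem $\phi$ is refuted on some such model $M$.

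The remaining, and principal, task is to upgrade the transformation of Proposition~\ref{pro:tra} so that, applied to a composition-respecting serial model $M$, it yields an \emph{open} dynamic topological model $M'$ that still interprets $;$ by composition and still refutes $\phi$. The natural construction takes the carrier of $M'$ to be the space of $R$-consistent ``trees'' $g$, with one coordinate $g(w)$ for each finite program-word $w$ subject to $g(w\pi)\in R_\pi(g(w))$, with $f_\pi$ the shift $g \mapsto (w \mapsto g(\pi w))$ and the topology pulled back along the root map $g \mapsto g(\epsilon)$ from the discrete topology on the state space of $M$. Truth preservation goes exactly as for Proposition~\ref{pro:tra}: the minimal neighbourhood of $g$ fixes only $g(\epsilon)$, so $g \in \cl(f_\pi^{-1}(\val{\phi}))$ iff some $R_\pi$-successor of $g(\epsilon)$ satisfies $\phi$, matching the relational clause. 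Openness is then a short computation: a basic open set has the form $\{g : g(\epsilon) \in U\}$, and $f_\pi$ sends it onto $\{h : h(\epsilon) \in R_\pi(U)\}$, which is again basic open, so every $f_\pi$ is open.

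The hard part will be reconciling openness with composition. To force $f_{\pi_1;\pi_2} = f_{\pi_2}\circ f_{\pi_1}$ one must thread the seed coherently through the two shifts, which means building the tree space over the quotient of program-words that identifies the single program $\pi_1;\pi_2$ with the length-two word $\pi_1\pi_2$; here the hypothesis $R_{\pi_1;\pi_2} = R_{\pi_1};R_{\pi_2}$ on $M$ is exactly what guarantees that $R$-consistent trees over this quotient exist and realize all the required successors. The obstacle I expect to dominate the work is checking that passing to this quotient disturbs neither truth preservation nor the surjectivity computation $f_\pi(\{g(\epsilon)\in U\}) = \{h : h(\epsilon)\in R_\pi(U)\}$ underlying openness --- equivalently, that the identifications never collapse distinct successor choices in a way that either falsifies a closure clause or shrinks the image of a basic open. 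Once this is verified, each compound $f_{\pi_1;\pi_2} = f_{\pi_2}\circ f_{\pi_1}$ is automatically open as a composition of open maps, and soundness together with this transformation yields completeness in the usual way. \qed
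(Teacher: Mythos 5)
Your proposal is correct and follows the same two-part skeleton as the paper's proof: soundness of (Seq) via Lemma \ref{lem:seq} (with the remaining axioms inherited from Theorem \ref{thm:spdl2}), and completeness by refuting each non-theorem relationally and then applying the network transformation of Appendix \ref{app:tra}. The difference lies in what you add on the completeness side, and the additions are substantive rather than redundant. The paper's own argument is a one-liner: a non-theorem of \textsf{SPDL$_{0}$} + (Seq) is a non-theorem of \textsf{SPDL$_{0}$}, hence refuted on some serial PDL model by Theorem \ref{thm:spdl}; Proposition \ref{pro:tra} transfers this to $\tilde{M}$, and $\tilde{M}$ is open because $f_{\pi}(U_{x}) = \bigcup_{y \in R_{\pi}(x)} U_{y}$. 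What the paper does not check is that $\tilde{M}$ satisfies $f_{\pi_{1};\pi_{2}} = f_{\pi_{2}} \circ f_{\pi_{1}}$, and in fact it does not: for a network $\alpha: \Pi^{*} \to X$ the coordinate $\alpha(\pi_{1};\pi_{2})$ (indexed by a single letter of $\Pi$) and the coordinate $\alpha(\pi_{1},\pi_{2})$ (indexed by a two-letter word) are unconstrained relative to each other, so $f_{\pi_{1};\pi_{2}}(\alpha)$ and $f_{\pi_{2}}(f_{\pi_{1}}(\alpha))$ generally differ. Since the composition identity is exactly what the proof of Lemma \ref{lem:seq} uses---and without it (Seq) is invalid even on open models, e.g.\ a two-point discrete space with $f_{\pi_{1}} = f_{\pi_{2}} = \mathrm{id}$ and $f_{\pi_{1};\pi_{2}}$ the swap---it must be read into the model class, and then the paper's completeness step is incomplete precisely where you do extra work: your Sahlqvist step produces a composition-respecting serial model to start from, and your quotient of program words (identifying the letter $\pi_{1};\pi_{2}$ with the word $\pi_{1}\pi_{2}$) makes the identity hold in the transformed model by construction. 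The verification you flag as the anticipated obstacle does go through: with $\Pi$ freely generated from basic programs every word has a unique normal form over basic letters; $R$-consistency of composite steps in a quotiented network follows from $R_{\pi_{1};\pi_{2}} = R_{\pi_{1}};R_{\pi_{2}}$ in the base model; truth preservation and the image computation $f_{\pi}(U_{x}) = \bigcup_{y \in R_{\pi}(x)} U_{y}$ are unchanged (seriality is what lets you fill in the unconstrained branches); and compound maps are open as compositions of open maps. In short, your proof is the paper's proof with its gap filled, at the cost of the extra Sahlqvist preliminary and the bookkeeping of the quotient.
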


Like continuity, openness of the function $f_{\pi}$ can be defined in the object language; in fact, it is defined by the converse of the scheme defining continuity:
$$\Box \Next_{\pi} \phi \lthen \Next_{\pi} \Box \phi.$$
Roughly speaking, this says that whatever you can (in principle) predict about executing $\pi$ beforehand you could also come to know afterward. This has a ``perfect recall'' type flavour, except the relevant epistemic notion is not what is actually known but what \textit{could come to be known}. Besides serving to validate the standard sequencing axiom, this principle also plays a crucial role in the next section, where we extend the present framework to incorporate knowledge.

\commentout{
%crashing
\subsection{Crashing} \label{sec:cra}

We return now to examine the more general class of PDL models in which $R_{\pi}(x)$ can be empty. This allowance for ``crashing'' is not only permitted in the standard setting, but plays an essential role in defining ``test programs'', which we discuss in Section \ref{sec:tst}. In fact, allowing for this kind of failure will help us connect our representation of program execution to the notion of \emph{information update}, typically represented by deleting (or disregarding) states.

To capture program crashes in dynamic topological models our first step is to generalize the definition of such models: a \defin{partial dynamic topological model} is a dynamic topological model in which the functions $f_{\pi}$ may be \emph{partial}. As before, we interpret $\Next_{\pi}$ is the obvious way:
\begin{eqnarray*}
x \models \Next_\pi \phi & \textrm{ iff } & f_\pi(x) \models \phi,
\end{eqnarray*}
taking care to note that the righthand side of this definition now carries the implicit assumption that $f_{\pi}(x)$ is actually defined. This means that
\begin{eqnarray*}
x \not\models \Next_\pi \phi & \textrm{ iff } & \textrm{either $f_\pi(x) \models \lnot \phi$ or $f_{\pi}(x)$ is not defined.}
\end{eqnarray*}
So the equivalence $\Next_{\pi} \lnot \phi \liff \lnot \Next_{\pi} \phi$, which is easily seen to be valid in dynamic topological models, can fail in partial dynamic topological models.

We continue to interpret $\nNext{\pi} \phi$ by equating it with $\Diamond \Next_{\pi} \phi$, so we have
\begin{eqnarray*}
x \models \nNext{\pi}\phi & \textrm{iff} & x \in \cl(f_{\pi}^{-1}(\val{\phi})),
\end{eqnarray*}
as before. This means that $x \models \nNext{\pi}\phi$ just in case every measurement $U$ compatible with $x$ is compatible with $\pi$ (not crashing and) producing a $\phi$-state. The interpretation of the dual, however, changes. If we define $C_{\pi} = \{x \in X \: : \: \textrm{$f_{\pi}(x)$ is not defined}\}$, so $C_{\pi}$ denotes the set of all states where $\pi$ crashes, then we have
\begin{eqnarray*}
\val{\lnot \Next_\pi \phi} & = & \val{\Next_{\pi} \lnot \phi} \cup C_{\pi},
\end{eqnarray*}
and so
\begin{eqnarray*}
x \models [\pi]\phi & \textrm{iff} & x \models \lnot \nNext{\pi} \lnot \phi\\
& \textrm{iff} & x \models \lnot \Diamond \Next_{\pi} \lnot \phi\\
& \textrm{iff} & x \notin \cl(\val{\Next_{\pi} \lnot \phi})\\
& \textrm{iff} & x \in \int(\val{\Next_{\pi} \phi} \cup C_{\pi})\\
& \textrm{iff} & x \in \int(f_{\pi}^{-1}(\val{\phi}) \cup C_{\pi}).
\end{eqnarray*}

We are now in a position to prove the analogue of Theorem \ref{thm:spdl2}, namely, that the properties of nondeterministic program execution captured by standard (not necessarily serial) PDL models are preserved when we switch to partial dynamic topological models.

\begin{theorem} \label{thm:pdtm}
\textsf{PDL$_{0}$} is a sound and complete axiomatization of the language $\L_{PDL}$ with respect to the class of all partial dynamic topological models.
\end{theorem}
}

%knowledge and learning
\section{Knowledge and Learning} \label{sec:kal}

To study the epistemology of nondeterministic program execution, we want to be able to reason not only about what \textit{can be} known, but also about what \textit{is} known. To do so we need a richer semantic setting, for which we turn to \emph{topological subset models} \cite{DMP96}; essentially, these use an additional parameter to keep track of the current state of information, through which a standard knowledge modality can be interpreted.

Topological subset models have experienced renewed interest in recent years \cite{Bjorndahl18,vDKO15,BOS17,BO17}, beginning with the work in \cite{Bjorndahl18} studying public announcements in the topological setting. Standard semantics for public announcement logic take the \emph{precondition} of an announcement of $\phi$ to be the truth of $\phi$; in the topological setting, this precondition is strengthened to the \textit{knowability} of $\phi$. As we will see, this interpretation of public announcements is recapitulated in the present framework via a natural interpretation of \emph{test programs}.
%These models have also been used to study the interplay between knowledge, knowability, and belief by replacing Stalnaker's principle of ``strong belief'' (or ``full belief''), $B \phi \lthen B K \phi$, with the weaker principle $B \phi \lthen B \Box \phi$ \cite{BO17}.

%incorporating knowledge
\subsection{Incorporating Knowledge}

A \defin{topological subset model} just \emph{is} a topological model $(X,\cT,v)$; the difference lies in the semantic clauses for truth, which are defined with respect to \textit{pairs} of the form $(x,U)$, where $x \in U \in \cT$; such pairs are called \emph{epistemic scenarios}. Intuitively, $x$ represents the actual world, while $U$ captures the agent's current information and thus what they know. Formally, we interpret the language $\L_{K, \Box}$ given by
$$\phi ::= p \, | \, \lnot \phi \, | \, \phi \land \psi \, | \, K \phi \, | \, \Box \phi,$$
where $p \in \textsc{prop}$, as follows:
\begin{eqnarray*}
(x,U) \models p & \textrm{ iff } & x \in v(p)\\
(x,U) \models \lnot \phi & \textrm{ iff } & (x,U) \not\models \phi\\
(x,U) \models \phi \land \psi & \textrm{ iff } & (x,U) \models \phi \textrm{ and } (x,U) \models \psi\\
(x,U) \models K \phi & \textrm{ iff } & U \subseteq \val{\phi}^{U}\\
(x,U) \models \Box \phi & \textrm{ iff } & x \in \int(\val{\phi}^{U}),
\end{eqnarray*}
where $\val{\phi}^{U} = \{x \in U \: : \: (x,U) \models \phi\}$. So the agent knows $\phi$ in the epistemic scenario $(x,U)$ just in case it is guaranteed by their current information $U$.

We next define \emph{dynamic topological subset models} by incorporating functions $f_{\pi}$ as above. Of course, we need subset-style semantics for the dynamic modalities. Perhaps the most natural way to define the updated epistemic scenario is as follows: 
\begin{eqnarray*}
(x,U) \models \Next_{\pi} \phi & \textrm{ iff } & (f_{\pi}(x), f_{\pi}(U)) \models \phi.
\end{eqnarray*}
%After all, if the current possible system states are exactly those in $U$, then after execution of $\pi$, the new set of possible states should be exactly those that arise from executing $\pi$ in some state in $U$, which is $f_{\pi}(U)$.

This definition raises two issues, one technical and the other conceptual. First, as a technical matter, the definition only makes sense if $f_{\pi}(U)$ is open---otherwise $(f_{\pi}(x), f_{\pi}(U))$ is not an epistemic scenario. So we have here another reason to restrict our attention to \emph{open} functions $f_{\pi}$.

Second, conceptually, in a sense this framework does not permit \textit{learning}. True, an agent's state of knowledge changes in accordance with program execution, but every ``live'' possibility $y \in U$ is preserved as the corresponding state $f_{\pi}(y)$ in the updated information set $f_{\pi}(U)$. Intuitively, then, the agent can never truly eliminate possibilities.

\emph{Dynamic epistemic logic} \cite{vDvdHK08} is a modern and vibrant area of research concerned exactly with this issue of how to capture the dynamics of information update. But rather than explicitly importing machinery from this paradigm (e.g., announcements) to represent learning in the present context, we can take advantage of a mechanic that PDL already has available: crashing. In Section \ref{sec:amt} we restricted attention to standard PDL models that were serial, corresponding in our framework to total functions. We now drop this assumption to allow \emph{partial} functions $f_{\pi}: X \pto X$ that are undefined at some points in $X$. This allows the corresponding updates to effectively delete states and thus capture information update in much the same way that, e.g., public announcements do.

A \defin{dynamic topological subset model (over $\Pi$)} is a topological subset model together with a family of partial, open functions\footnote{Typically the concept of openness is applied to total functions, but the definition makes sense for partial functions as well: $f$ is open provided, for all open $U$, $f(U) = \{y \in X \: : \: (\exists x \in U)(f(x) = y)\}$ is open.} $f_{\pi}: X \pto X$, $\pi \in \Pi$. Formulas of the language $\L_{K,\Box,\scriptsize{\Next}}$ given by
$$\phi ::= p \, | \, \lnot \phi \, | \, \phi \land \psi \, | \, K \phi \, | \, \Box \phi \, | \, \Next_{\pi} \phi,$$
are interpreted at epistemic scenarios via the semantic clauses introduced above, taking care to note that the righthand side of the clause defining $\Next_{\pi}$ now carries the implicit assumption that $f_{\pi}(x)$ is actually defined. We provide a sound and complete axiomatization of this logic in Appendix \ref{app:dtel}.

%\begin{theorem}
%\draft{something} is a sound and complete axiomatization of $\L_{K,\Box,\scriptsize{\Next}}$ with respect to the class of all dynamic topological subset models.
%\end{theorem}

%test programs
\subsection{Test Programs and Public Announcements} \label{sec:tst}

A standard enrichment of PDL expands the set of programs $\Pi$ to include \emph{test programs}. Unlike other program constructions, test programs are not built from existing programs but instead from formulas in the language: given a formula $\phi$, the program $\phi?$ is introduced to be interpreted by the relation $R_{\phi?}$ defined by
$$xR_{\phi?}y \textrm{ iff } x = y \textrm{ and } x \models \phi.$$
Intuitively, the program $\phi?$ crashes at states where $\phi$ is false, and otherwise does nothing.

Test programs are deterministic, so can be represented just as easily by functions:
$$
f_{\phi?}(x) = \left\{ \begin{array}{ll}
x & \textrm{if $x \models \phi$}\\
\textrm{undefined} & \textrm{otherwise.}
\end{array} \right.
$$
But to make sense of this definition in dynamic topological subset models, two issues must be addressed. First, the relation $x \models \phi$ is not actually defined in subset models---formulas are evaluated with respect to epistemic scenarios, not individual states. However, it is easy to see that when $\phi$ belongs the fragment $\L_{\Box,\scriptsize{\Next}}$, its truth in an epistemic scenario $(x,U)$ is independent of $U$; in this case, we can simply declare that $x \models \phi$ iff $(x,U) \models \phi$ for some (equivalently, all) open sets $U$ containing $x$. We therefore restrict the formation of test programs to $\phi \in \L_{\Box,\scriptsize{\Next}}$. 

Second, $f_{\phi?}$ may not be open. Indeed, $f_{\phi?}$ is open just in case $\val{\phi} = \{x \: : \: x \models \phi\}$ is open. If $\val{\phi}$ is not open then it contains at least one state $x \in \val{\phi} \mysetminus \int(\val{\phi})$, that is, a state at which $\phi$ is true but not \textit{measurably} true. Intuitively, at such states the test program $\phi?$ should crash, since it must fail to determine that $\phi$ is true. This motivates the following revised definition of $f_{\phi?}$:
$$
f_{\phi?}(x) = \left\{ \begin{array}{ll}
x & \textrm{if $x \in \int(\val{\phi})$}\\
\textrm{undefined} & \textrm{otherwise.}
\end{array} \right.
$$
These functions \textit{are} open. Moreover, under these revised semantics, we have:
\begin{eqnarray*}
(x,U) \models \Next_{\phi?}\psi & \textrm{iff} & (f_{\phi?}(x), f_{\phi?}(U)) \models \psi\\
& \textrm{iff} & x \in \int(\val{\phi}) \textrm{ and } (x, U \cap \int(\val{\phi})) \models \psi,
\end{eqnarray*}
which coincides exactly with the topological definition of a public announcement of $\phi$ as given in \cite{Bjorndahl18}.\footnote{Or, rather, it coincides with the dual of the definition given in \cite{Bjorndahl18}, but this is not an important difference.}

\section{Future Work} \label{sec:con}

We have formalized a relatively simple idea: namely, that the nondeterministic outcomes of a process are precisely those that the agent cannot rule out in advance. Using the tools of topology to represent potential observations, we have demonstrated a striking connection between deterministic processes and continuous functions, proved that certain axiomatizations of PDL remain sound and complete when reinterpreted in this enriched setting, and established a natural identity between test programs and public announcements.

Many questions remain, both conceptual and technical. What is the relationship between the (probabilistic) notion of \textit{chance} and the topological construal of nondeterminism presented here? Is there a way to import ``nondeterministic'' operations on programs, such as nondeterministic union or iteration, into this setting? Or is it perhaps better to focus on deterministic analogues of these program constructions, such as, ``If $\phi$ do $\pi_{1}$, else do $\pi_{2}$'', or, ``Do $\pi$ until $\phi$''? How much of dynamic epistemic logic can be recovered as program execution in dynamic topological subset models? For instance, can we make sense of test programs based on epistemic formulas (i.e., formulas that include the $K$ modality), as we can with public announcements? And how can we extend the axiomatization of dynamic topological subset models given in Appendix \ref{app:dtel} to include test programs? These questions and more are the subject of ongoing research.

\bibliographystyle{splncs}
\bibliography{../../../Research/Bibliography/abjorndahl}

%appendix
\appendix

%proofs
\section{Proofs and Details} \label{app:prf}

%characterizing continuity
\subsection{Characterizing continuity} \label{app:cha}

A \defin{dynamic topological frame (over $\Pi$)} is a tuple $F = (X, \cT, \{f_{\pi}\}_{\pi \in \Pi})$ where $(X,\cT)$ is a topological space and each $f_{\pi}: X \to X$. In other words, a frame is simply a dynamic topological model without a valuation function. A frame $F$ is said to \emph{validate} a formula $\phi$ just in case $\phi$ is true at every point of every model of the form $(F,v)$.

\begin{proposition}
The formula scheme $\Next_{\pi} \Box \phi \lthen \Box \Next_{\pi} \phi$ defines the class of dynamic topological frames in which $f_{\pi}$ is continuous: that is, for every dynamic topological frame $F$, $F$ validates every instance of $\Next_{\pi} \Box \phi \lthen \Box \Next_{\pi} \phi$ iff $f_{\pi}$ is continuous.
\end{proposition}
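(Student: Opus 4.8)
The plan is to reduce validity of the scheme to a purely set-theoretic containment and then match that containment against the standard interior-characterization of continuity. First I would unwind the semantics. Since $\val{\Next_\pi \psi} = f_\pi^{-1}(\val{\psi})$ and $\val{\Box \psi} = \int(\val{\psi})$, a routine computation gives $\val{\Next_\pi \Box \phi} = f_\pi^{-1}(\int(\val{\phi}))$ and $\val{\Box \Next_\pi \phi} = \int(f_\pi^{-1}(\val{\phi}))$. Hence a model $(F,v)$ validates the instance for $\phi$ exactly when $f_\pi^{-1}(\int(\val{\phi})) \subseteq \int(f_\pi^{-1}(\val{\phi}))$, and $F$ validates every instance iff this inclusion holds for the set $\val{\phi}$ determined by each choice of $\phi$ and $v$.

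For the direction from continuity to validity, suppose $f_\pi$ is continuous and fix any $\phi$ and $v$; write $A = \val{\phi}$. Then $\int(A)$ is open, so $f_\pi^{-1}(\int(A))$ is open by continuity. Monotonicity of preimages together with $\int(A) \subseteq A$ yields $f_\pi^{-1}(\int(A)) \subseteq f_\pi^{-1}(A)$, and since the interior is the largest open set contained in $f_\pi^{-1}(A)$, the open set $f_\pi^{-1}(\int(A))$ must lie inside $\int(f_\pi^{-1}(A))$. This is exactly the required inclusion, so every instance is valid.

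For the converse I would exploit the fact that frame validity quantifies over all valuations, which lets me realize an arbitrary open set as the denotation of a single propositional variable. Fix any open $V \subseteq X$, choose $p \in \textsc{prop}$, and let $v$ be any valuation with $v(p) = V$. Then $\val{p} = V$, and because $V$ is open, $\val{\Box p} = \int(V) = V$. Applying the assumed validity of $\Next_\pi \Box p \lthen \Box \Next_\pi p$ gives $f_\pi^{-1}(V) = f_\pi^{-1}(\int(V)) \subseteq \int(f_\pi^{-1}(V))$; combined with the trivial inclusion $\int(f_\pi^{-1}(V)) \subseteq f_\pi^{-1}(V)$, this forces $f_\pi^{-1}(V) = \int(f_\pi^{-1}(V))$, i.e.\ $f_\pi^{-1}(V)$ is open. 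As $V$ ranged over all open sets, $f_\pi$ is continuous.

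The only real subtlety --- and the closest thing to an obstacle --- is recognizing that quantifying over valuations is precisely what converts the object-language scheme into the ``preimage of open is open'' criterion: one must instantiate $\phi$ by a fresh propositional variable carrying an \emph{arbitrary open} denotation, and note that $\Box$ acts trivially on an already-open set so that the antecedent collapses to $f_\pi^{-1}(V)$. Beyond that choice, the argument is just bookkeeping with interiors and preimages, and requires no seriality, openness, or other assumption on $f_\pi$.
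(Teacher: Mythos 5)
Your proof is correct and takes essentially the same approach as the paper: the forward direction uses continuity to show $f_\pi^{-1}(\int(\val{\phi}))$ is an open set contained in $\int(f_\pi^{-1}(\val{\phi}))$, and the converse instantiates the scheme with a propositional variable whose valuation is an arbitrary open set. The only cosmetic difference is that you argue the converse directly (validity implies every open preimage is open) where the paper argues contrapositively from a witness to non-continuity.
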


\begin{proof}
First suppose that $M$ is a dynamic topological model in which $f_{\pi}$ is continuous, and let $x$ be a point in this model satisfying $\Next_{\pi} \Box \phi$. Then $f_{\pi}(x) \in \int(\val{\phi})$. By continuity, the set $U = f_{\pi}^{-1}(\int(\val{\phi}))$ is open. Moreover, it is easy to see that $x \in U$ and $U \subseteq \val{\Next_{\pi}\phi}$, from which it follows that $x \models \Box \Next_{\pi} \phi$.

Conversely, suppose that $F$ is a dynamic topological frame in which $f_{\pi}$ is not continuous. Let $U$ be an open subset of $X$ such that $A = f_{\pi}^{-1}(U)$ is not open, and let $x \in A \mysetminus \int(A)$; consider a valuation $v$ such that $v(p) = U$. In the resulting model, since $f_{\pi}(x) \in U = \int(U)$, we have $x \models \Next_{\pi} \Box p$. On the other hand, since by definition $x \notin \int(f_{\pi}^{-1}(U))$, we have $x \not\models \Box \Next_{\pi} p$. \qed
\end{proof}

%model transformation
\subsection{Model transformation} \label{app:tra}

Our task in this section is to transform an arbitrary serial PDL model into a dynamic topological model in a truth-preserving manner. The intuition for this transformation is fairly straightforward: in a serial PDL model, each state may be nondeterministically compatible with many possible execution paths corresponding to all the possible ways of successively traversing $R_{\pi}$-edges. In a dynamic topological model, by contrast, all such execution paths must be differentiated by state---roughly speaking, this means we need to create a new state for each possible execution path in the standard model. Then, to preserve the original notion of nondeterminism, we overlay a topological structure that ``remembers'' which new states originated from the same state in the standard model by rendering them topologically indistinguishable.

Let $M = (X, (R_{\pi})_{\pi \in \Pi}, v)$ be a serial PDL model. Let $\Pi^{*}$ denote the set of all finite sequences from $\Pi$. A map $\alpha: \Pi^{*} \to X$ is called a \defin{network (through $M$)} provided $(\forall \vec{\pi} \in \Pi^{*})(\forall \pi \in \Pi)(\alpha(\vec{\pi})R_{\pi}\alpha(\vec{\pi}, \pi))$. In other words, a network $\alpha$ must respect $R_{\pi}$-edges in the sense that it associates with each sequence $(\pi_{1}, \ldots, \pi_{n})$ a path $(x_{1}, \ldots, x_{n+1})$ through $X$ such that, for each $i$, $x_{i}R_{\pi_{i}}x_{i+1}$.\footnote{Specifically, $x_{1} = \alpha(\emptyset)$ and $(\forall i \geq 2)(x_{i} = \alpha(x_{1}, \ldots, x_{i-1}))$.} Networks through $M$ constitute the points of the dynamic topological model we are building:
$$\tilde{X} = \{\alpha \: : \: \textrm{$\alpha$ is a network through $M$.}\}.$$

The topology we equip $\tilde{X}$ with is particularly simple: for each $x \in X$, let $U_{x} = \{\alpha \in \tilde{X} \: : \: \alpha(\emptyset) = x\}$. Clearly the sets $U_{x}$ partition $X$ and so form a topological basis; let $\cT$ be the topology they generate.

Next we define the functions $f_{\pi}: \tilde{X} \to \tilde{X}$. Intuitively, $\alpha \in \tilde{X}$ is a complete record of what paths will be traversed in the original state space $X$ for every sequence of program executions. Therefore, after executing $\pi$, the updated record $f_{\pi}(\alpha)$ should simply consist in those paths determined by $\alpha$ that start with an execution of $\pi$:
$$f_{\pi}(\alpha)(\vec{\pi}) = \alpha(\pi,\vec{\pi}).$$

Finally, define $\tilde{v}: \textsc{prop} \to 2^{\tilde{X}}$ by
$$\tilde{v}(p) = \{\alpha \in \tilde{X} \: : \: \alpha(\emptyset) \in v(p)\}.$$

Let $\tilde{M} = (\tilde{X}, \cT, (f_{\pi})_{\pi \in \Pi}, \tilde{v})$.

\begin{proposition} \label{pro:tra}
For every $\phi \in \L_{PDL}$, for every $\alpha \in \tilde{X}$, $(\tilde{M},\alpha) \models \phi \textrm{ iff } (M,\alpha(\emptyset)) \models \phi$.
\end{proposition}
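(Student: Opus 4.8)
The plan is to prove the biconditional by induction on the structure of $\phi$. The atomic case is immediate from the definition of $\tilde{v}$: $(\tilde{M},\alpha) \models p$ iff $\alpha(\emptyset) \in v(p)$ iff $(M,\alpha(\emptyset)) \models p$. The Boolean cases ($\lnot$ and $\land$) follow routinely from the induction hypothesis, since truth at the relevant points in both models is governed by the same Boolean clauses. All the real work is concentrated in the modal case $\nNext{\pi}\phi$.

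The first preparatory observation concerns the topology on $\tilde{X}$. Since the basic sets $U_{x}$ partition $\tilde{X}$, the topology $\cT$ is a partition topology in which each $U_{x}$ is clopen and is the smallest open neighbourhood of any $\alpha$ with $\alpha(\emptyset) = x$. Consequently, for any $A \subseteq \tilde{X}$ one has $\alpha \in \cl(A)$ iff $U_{\alpha(\emptyset)} \cap A \neq \emptyset$. Applying this to $A = f_{\pi}^{-1}(\val{\phi})$ and unwinding the definition $f_{\pi}(\beta)(\emptyset) = \beta(\pi)$, the statement $(\tilde{M},\alpha) \models \nNext{\pi}\phi$ becomes: there exists a network $\beta$ with $\beta(\emptyset) = \alpha(\emptyset)$ such that $(\tilde{M},f_{\pi}(\beta)) \models \phi$, which by the induction hypothesis is equivalent to $(M,\beta(\pi)) \models \phi$. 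Meanwhile, the right-hand side $(M,\alpha(\emptyset)) \models \nNext{\pi}\phi$ unwinds to: there exists $y$ with $\alpha(\emptyset) R_{\pi} y$ and $(M,y) \models \phi$. So the modal case reduces to showing that these two existential statements are equivalent.

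For the forward direction, given such a $\beta$ I take $y = \beta(\pi)$; since $\beta$ is a network, $\beta(\emptyset) R_{\pi} \beta(\pi)$, i.e.\ $\alpha(\emptyset) R_{\pi} y$, and $(M,y) \models \phi$ by hypothesis, as required. The converse direction is where the real content lies and is the main obstacle: given a witness $y$ with $\alpha(\emptyset) R_{\pi} y$ and $(M,y) \models \phi$, I must manufacture an actual network $\beta$ through $M$ with $\beta(\emptyset) = \alpha(\emptyset)$ and $\beta(\pi) = y$. My plan is to build $\beta$ by grafting: fix, using seriality, some network $\gamma$ with $\gamma(\emptyset) = y$, define $\beta(\pi, \vec{\rho}) = \gamma(\vec{\rho})$ on every sequence beginning with $\pi$, and set $\beta(\vec{\pi}) = \alpha(\vec{\pi})$ on all remaining sequences (including $\emptyset$).

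A short case analysis on whether a sequence begins with $\pi$ then verifies that $\beta$ respects every $R_{\pi'}$-edge: edges entirely within the $\alpha$-part hold because $\alpha$ is a network, edges entirely within the grafted $\gamma$-part hold because $\gamma$ is a network, and the only edge spanning the two regions is $\beta(\emptyset) R_{\pi} \beta(\pi)$, which is precisely the assumption $\alpha(\emptyset) R_{\pi} y$. This confirms that $\beta$ is a network lying in $U_{\alpha(\emptyset)} \cap f_{\pi}^{-1}(\val{\phi})$, so $\alpha \in \cl(f_{\pi}^{-1}(\val{\phi}))$, completing the modal case and hence the induction. The essential role of seriality is exactly to guarantee that the tail network $\gamma$ based at $y$ exists; without it the grafting could fail, which is why the construction is carried out for serial models.
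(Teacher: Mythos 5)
Your proof is correct and follows essentially the same route as the paper's: induction on $\phi$, with the modal case $\nNext{\pi}\phi$ reduced, via the observation that $U_{\alpha(\emptyset)}$ is a minimal open neighbourhood in the partition topology, to the equivalence of two existential statements over $M$ and $\tilde{M}$. The only difference is that you explicitly carry out the grafting construction of a network witnessing the converse direction (and correctly flag seriality as what makes the tail network exist)---a step the paper compresses into the phrase ``this is in turn equivalent to $(M,x) \models \nNext{\pi}\phi$'' without spelling out the construction.
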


\begin{proof}
We proceed by induction on the structure of $\phi$. The base case when $\phi = p \in \textsc{prop}$ follows directly from the definition of $\tilde{v}$:
\begin{eqnarray*}
(\tilde{M},\alpha) \models p & \textrm{iff} & \alpha \in \tilde{v}(p)\\
& \textrm{iff} & \alpha(\emptyset) \in v(p)\\
& \textrm{iff} & (M, \alpha(\emptyset)) \models p.
\end{eqnarray*}
The inductive steps for the Boolean connectives are straightforward. So suppose inductively that the result holds for $\phi$; we wish to show it holds for $\nNext{\pi} \phi$.

Let $\alpha \in \tilde{X}$ and $x = \alpha(\emptyset)$. By definition, $(\tilde{M},\alpha) \models \nNext{\pi} \phi$ iff $\alpha \in \cl(f_{\pi}^{-1}(\val{\phi}_{\tilde{M}}))$. Since the topology is generated by a partition, we know that $U_{x}$ is a minimal neighbourhood of $\alpha$, and therefore the preceding condition is equivalent to:
$$U_{x} \cap f_{\pi}^{-1}(\val{\phi}_{\tilde{M}}) \neq \emptyset.$$
This intersection is nonempty just in case there exists an $\alpha' \in \tilde{X}$ such that $\alpha'(\emptyset) = x$ and $f_{\pi}(\alpha') \in \val{\phi}_{\tilde{M}}$. By the induction hypothesis,
\begin{eqnarray*}
f_{\pi}(\alpha') \in \val{\phi}_{\tilde{M}} & \textrm{iff} & (\tilde{M}, f_{\pi}(\alpha')
) \models \phi\\
& \textrm{iff} & (M, f_{\pi}(\alpha')(\emptyset)) \models \phi\\
& \textrm{iff} & (M, \alpha'(\pi)) \models \phi.
\end{eqnarray*}
So to summarize, we have shown that $(\tilde{M},\alpha) \models \nNext{\pi} \phi$ iff there exists an $\alpha' \in \tilde{X}$ such that $\alpha'(\emptyset) = x$ and $(M, \alpha'(\pi)) \models \phi$. Since we know that $\alpha'(\emptyset) R_{\pi} \alpha'(\pi)$, this is in turn equivalent to $(M,x) \models \nNext{\pi} \phi$, which completes the induction. \qed
\end{proof}

%sequencing
\subsection{Sequencing} \label{app:seq}

\begin{customlemma}{\ref{lem:seq}}
Let $(X, \cT, \{f_{\pi}\}_{\pi \in \Pi}, v)$ be a dynamic topological model. If $f_{\pi_{1}}$ is open, then
$$\val{\nNext{\pi_{1};\pi_{2}}\phi} = \val{\nNext{\pi_{1}}\nNext{\pi_{2}} \phi}.$$
\end{customlemma}

\begin{proof}
It suffices to show that $\val{\nNext{\pi_{1};\pi_{2}}\phi} \supseteq \val{\nNext{\pi_{1}}\nNext{\pi_{2}} \phi}$. So let
$$x \in \val{\nNext{\pi_{1}}\nNext{\pi_{2}} \phi} = cl(f_{\pi_{1}}^{-1}(cl(f_{\pi_{2}}^{-1}(\val{\phi}))));$$
then for every open neighbourhood $U$ containing $x$, we know that $U \cap f_{\pi_{1}}^{-1}(cl(f_{\pi_{2}}^{-1}(\val{\phi}))) \neq \emptyset$. This implies that $f_{\pi_{1}}(U) \cap cl(f_{\pi_{2}}^{-1}(\val{\phi})) \neq \emptyset$; since $f_{\pi_{1}}(U)$ is open, it follows that $f_{\pi_{1}}(U) \cap f_{\pi_{2}}^{-1}(\val{\phi}) \neq \emptyset$ as well. This then implies that $U \cap f_{\pi_{1}}^{-1}(f_{\pi_{2}}^{-1}(\val{\phi})) \neq \emptyset$, and therefore
$$x \in \cl(f_{\pi_{1}}^{-1}(f_{\pi_{2}}^{-1}(\val{\phi}))) = \val{\nNext{\pi_{1};\pi_{2}}\phi},$$
as desired. \qed
\end{proof}

Say that a dynamic topological model is \emph{open} if each $f_{\pi}$ is open.
\begin{customthm}{\ref{thm:seq}}
\textsf{SPDL$_{0}$} + (Seq) is a sound and complete axiomatization of the language $\L_{PDL}$ with respect to the class of all open dynamic topological models.
\end{customthm}

\begin{proof}
Lemma \ref{lem:seq} shows that (Seq) is valid in the class of all open dynamic topological models. For completeness, it suffices to observe that the dynamic topological model $\tilde{M}$ constructed in Appendix \ref{app:tra} is itself open: indeed, for each basic open $U_{x}$, we have
\begin{eqnarray*}
f_{\pi}(U_{x}) & = & \{\alpha \in \tilde{X} \: : \: xR_{\pi}\alpha(\emptyset)\}\\
& = & \bigcup_{y \in R(x)} U_{y},
\end{eqnarray*}
which of course is open. \qed
\end{proof}

\begin{proposition}
The formula scheme $\Box \Next_{\pi} \phi \lthen \Next_{\pi} \Box \phi$ defines the class of dynamic topological frames in which $f_{\pi}$ is open: that is, for every dynamic topological frame $F$, $F$ validates every instance of $\Box \Next_{\pi} \phi \lthen \Next_{\pi} \Box \phi$ iff $f_{\pi}$ is open.
\end{proposition}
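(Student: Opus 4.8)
The plan is to mirror the proof of the continuity proposition in Appendix~\ref{app:cha}, exploiting the fact that this scheme is the exact converse of the continuity axiom. First I would unpack the semantics of both sides: under the interpretation $x \models \Box\psi$ iff $x \in \int(\val{\psi})$ and $x \models \Next_\pi\psi$ iff $f_\pi(x) \models \psi$, we have $\val{\Box\Next_\pi\phi} = \int(f_\pi^{-1}(\val{\phi}))$ and $\val{\Next_\pi\Box\phi} = f_\pi^{-1}(\int(\val{\phi}))$. Hence validity of the scheme on a frame reduces to the purely set-theoretic inclusion $\int(f_\pi^{-1}(A)) \subseteq f_\pi^{-1}(\int(A))$ holding for every $A \subseteq X$, using that a valuation can assign $\val{p}$ an arbitrary subset $A$. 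The task is thus to show that this inclusion characterizes openness of $f_\pi$.

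For the forward direction, I would assume $f_\pi$ is open and take a point $x$ in an arbitrary model with $x \models \Box\Next_\pi\phi$, so that $x \in \int(f_\pi^{-1}(\val{\phi}))$. Choosing an open witness $U$ with $x \in U \subseteq f_\pi^{-1}(\val{\phi})$, openness gives that $f_\pi(U)$ is open; it contains $f_\pi(x)$, and $U \subseteq f_\pi^{-1}(\val{\phi})$ yields $f_\pi(U) \subseteq \val{\phi}$, whence $f_\pi(x) \in \int(\val{\phi})$, i.e.\ $x \models \Next_\pi\Box\phi$.

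For the converse, I would argue contrapositively: assume $f_\pi$ is not open, so there is an open $U$ with $f_\pi(U)$ not open, and pick a point $y \in f_\pi(U) \mysetminus \int(f_\pi(U))$ together with some $x \in U$ satisfying $f_\pi(x) = y$. Setting $v(p) = f_\pi(U)$, I would check that $U \subseteq f_\pi^{-1}(f_\pi(U)) = f_\pi^{-1}(\val{p})$, so that $x \in \int(f_\pi^{-1}(\val{p}))$ and thus $x \models \Box\Next_\pi p$; meanwhile $f_\pi(x) = y \notin \int(f_\pi(U)) = \int(\val{p})$, so $x \not\models \Next_\pi\Box p$, refuting the scheme on this frame.

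The computations here are routine; the only mildly delicate point is the converse, where the choice of valuation $v(p) = f_\pi(U)$ is what links the failure of openness to a concrete countermodel, relying on the always-valid inclusion $U \subseteq f_\pi^{-1}(f_\pi(U))$ to secure the antecedent. No genuine obstacle is anticipated.
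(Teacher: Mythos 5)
Your proof is correct and follows essentially the same route as the paper's: the forward direction uses the image of an open witness inside $f_{\pi}^{-1}(\val{\phi})$ (the paper takes $f_{\pi}(\int(f_{\pi}^{-1}(\val{\phi})))$ itself), and the converse builds the identical countermodel with $v(p) = f_{\pi}(U)$ for an open $U$ whose image is not open. The preliminary reduction to the inclusion $\int(f_{\pi}^{-1}(A)) \subseteq f_{\pi}^{-1}(\int(A))$ is a harmless reframing of what the paper does implicitly.
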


\begin{proof}
First suppose that $M$ is a dynamic topological model in which $f_{\pi}$ is open, and let $x$ be a point in this model satisfying $\Box \Next_{\pi} \phi$. Then $x \in \int(f_{\pi}^{-1}(\val{\phi}))$. By openness, the set $V = f(\int(f_{\pi}^{-1}(\val{\phi})))$ is open. Moreover, it is easy to see that $f_{\pi}(x) \in V$ and $V \subseteq \val{\phi}$, from which it follows that $x \models \Next_{\pi} \Box \phi$.

Conversely, suppose that $F$ is a dynamic topological frame in which $f_{\pi}$ is not open. Let $U$ be an open subset of $X$ such that $A = f_{\pi}(U)$ is not open, and let $x \in U$ be such that $f_{\pi}(x) \in A \mysetminus \int(A)$; consider a valuation $v$ such that $v(p) = A$. In the resulting model, since $x \in U$ and $f(U) \subseteq \val{p}$, we have $x \models \Box \Next_{\pi} p$. On the other hand, since by definition $f_{\pi}(x) \notin \int(\val{p})$, we have $x \not\models \Next_{\pi} \Box p$. \qed
\end{proof}

%dynamic topological epistemic logic
\subsection{Dynamic Topological Epistemic Logic} \label{app:dtel}

In this section we provide a sound and complete axiomatization of the language $\L_{K,\Box,\scriptsize{\Next}}$ with respect to the class of all dynamic topological subset models.

Let $\mathsf{CPL}$ denote the axioms and rules of classical propositional logic, let $\mathsf{S5}_{K}$ denote the $\mathsf{S5}$ axioms and rules for the $K$ modality, and let $\mathsf{S4}_{\Box}$ denote the $\mathsf{S4}$ axioms and rules for the $\Box$ modality (see, e.g., \cite{BRV01}). Let $\mathrm{(KI)}$ denote the axiom scheme $K \phi \lthen \Box \phi$, and set
$$\mathsf{EL}_{K,\Box} \defeq \mathsf{CPL} + \mathsf{S5}_{K} + \mathsf{S4}_{\Box} + \mathrm{(KI)}.$$

\begin{theorem}[{\cite[Theorem 1]{Bjorndahl18}}] \label{thm:sac}
$\mathsf{EL}_{K,\Box}$ is a sound and complete axiomatization of $\L_{K,\Box}$ with respect to the class of all dynamic topological subset models.
\end{theorem}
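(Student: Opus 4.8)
The plan is to prove soundness directly and to obtain completeness by first establishing completeness with respect to a class of relational frames and then ``topologizing'' those frames via the Alexandrov construction. A useful preliminary observation is that $\L_{K,\Box}$ contains no $\Next_{\pi}$ modality, so the functions $f_{\pi}$ play no role in evaluating formulas of this language; throughout I may therefore fix them to be any family of partial open functions (e.g.\ the identity maps $\mathrm{id}$, which are total and open) and effectively work with plain topological subset models.

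For soundness I would check each component of $\mathsf{EL}_{K,\Box}$ against the clauses for $(x,U)\models K\phi$ and $(x,U)\models\Box\phi$. Since $K$ is universal quantification over the fixed information set $U$, it validates $\mathsf{S5}_{K}$; since $\Box$ is the interior operator relativized to $U$, and the interior satisfies $\int(A)\subseteq A$, $\int(\int(A))=\int(A)$, and $\int(A\cap B)=\int(A)\cap\int(B)$, it validates $\mathsf{S4}_{\Box}$. For $\mathrm{(KI)}$: if $U\subseteq\val{\phi}^{U}$ then, because $x\in U\in\cT$, the open set $U$ itself witnesses $x\in\int(\val{\phi}^{U})$, so $K\phi\lthen\Box\phi$ is valid.

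For completeness, I would first show by the standard canonical-model argument that $\mathsf{EL}_{K,\Box}$ is complete with respect to the class of Kripke frames $(W,\sim_{K},\leq_{\Box})$ in which $\sim_{K}$ is an equivalence relation (from $\mathsf{S5}_{K}$), $\leq_{\Box}$ is a preorder (from $\mathsf{S4}_{\Box}$), and $\leq_{\Box}\subseteq\sim_{K}$ (the first-order frame condition corresponding to $\mathrm{(KI)}$); all the axioms involved are Sahlqvist, so canonicity and the correspondence are automatic. Given a non-theorem $\phi$, fix a relational model refuting it at a world $w_{0}$. Because $\leq_{\Box}\subseteq\sim_{K}$, both accessibility relations keep us inside the $\sim_{K}$-class of $w_{0}$, so the truth value of $\phi$ at $w_{0}$ depends only on that class; I may therefore pass to the submodel on $[w_{0}]_{\sim_{K}}$, where $\sim_{K}=W\times W$ and $\leq_{\Box}$ is a preorder on $W$.

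The final step reads this relational model as a topological subset model. I would equip $W$ with the Alexandrov topology $\cT$ generated by the $\leq_{\Box}$-up-sets, so that the minimal open neighbourhood of each $x$ is $\{y : x\leq_{\Box}y\}$ and the topological interior coincides with the $\mathsf{S4}$ box over $\leq_{\Box}$; I would transfer the valuation unchanged and take the single information set $U=W$, which is open. A routine induction then shows $(x,W)\models\psi$ iff $x\models\psi$ in the relational model for every $x\in W$ and every subformula $\psi$: the Boolean cases are immediate, the $K$ case matches because $U=W$ makes $K$ universal over $W$, and the $\Box$ case matches because $x\in\int(\val{\psi}^{W})$ holds exactly when $\psi$ holds throughout $\{y:x\leq_{\Box}y\}$. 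Hence $\phi$ fails at $(w_{0},W)$. The step I expect to be the main obstacle is justifying this topologization precisely: one must confirm that fixing $U=W$ throughout is legitimate, which hinges on the specific \emph{interior} reading of $\Box$ used here (the information set genuinely stays fixed when $K\phi$ and $\Box\phi$ are decomposed, unlike an information-shrinking ``effort'' modality), and that the Alexandrov interior reproduces the $\mathsf{S4}$ modality with the correct orientation of up-sets versus down-sets; the Sahlqvist completeness and the soundness verifications are otherwise standard.
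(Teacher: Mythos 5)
Your proof is correct, including the care you take with the two points that could trip it up: the fact that $\L_{K,\Box}$ never sees the maps $f_{\pi}$ (so identity maps suffice to make your counter-model a \emph{dynamic} topological subset model), and the fact that $\Box$ here is the fixed-$U$ interior modality rather than an information-shrinking effort modality, which is exactly what legitimizes evaluating everything at the single scenario $(x,W)$.

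The route, however, differs in its packaging from the one the paper relies on. The paper does not prove this theorem itself; it cites it from prior work, and the shape of that proof is visible in the appendix proof of the $\mathsf{DTEL}$ theorem, which ``mirrors'' it: one builds a canonical topological subset model directly out of maximal consistent sets, defines $\sim$ from the $K$-modality and $R(x)$ from the $\Box$-modality, takes $\{R(x) \: : \: x \in X\}$ as a basis for the topology, and proves a truth lemma at scenarios $(x,[x])$. Your proof instead factors the construction through two off-the-shelf results: Sahlqvist canonicity gives completeness over relational frames $(W,\sim_{K},\leq_{\Box})$ with $\leq_{\Box} \subseteq \sim_{K}$, a generated-submodel step reduces to a single $\sim_{K}$-class, and the Alexandrov topology of $\leq_{\Box}$-up-sets turns the relational model into a topological subset model with $U = W$. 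Substantively these are the same construction --- the canonical basis $\{R(x)\}$ \emph{is} the Alexandrov topology of the canonical $\Box$-preorder, and restricting to $[x]$ \emph{is} your generated-submodel step --- but your decomposition is more modular: it reuses general completeness machinery and isolates the topological content in a single transfer lemma, at the cost of invoking Sahlqvist theory where the direct construction needs only Lindenbaum's lemma and a bespoke truth lemma. The direct construction has the advantage that it extends to the full language with $\Next_{\pi}$ (as the appendix shows), where your fixed-$U$, identity-function shortcut would no longer be available.
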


Let $\mathsf{DTEL}$ denote the axiom system $\mathsf{EL}_{K, \Box}$ together with the axiom schemes and rules of inference given in Table \ref{tbl:dtel}.
\begin{table}
\caption{Additional axioms and rules of inference for \textsf{DTEL}}\label{tbl:dtel}
\begin{tabularx}{\textwidth}{>{\hsize=.5\hsize}X>{\hsize=1.5\hsize}X>{\hsize=1.0\hsize}X}
\toprule
($\lnot$-PC$_{\pi}$) & $\Next_{\pi} \lnot \phi \liff (\lnot \Next_{\pi}\phi \land \Next_{\pi} \verum)$ & Partial commutativity of $\lnot$\\
($\land$-C$_{\pi}$) & $\Next_{\pi}(\phi \land \psi) \liff (\Next_{\pi}\phi \land \Next_{\pi}\psi)$ & Commutativity of $\land$\\
($K$-PC$_{\pi}$) & $\Next_{\pi} \verum \lthen (\Next_{\pi} K\phi \liff K(\Next_{\pi} \verum \lthen \Next_{\pi} \phi))$ & Partial commutativity of $K$\\
(O$_{\pi}$) & $(\Box \lnot \Next_{\pi} \phi \land \Next_{\pi} \verum) \lthen \Next_{\pi} \Box \lnot \phi$ & Openness\\
(Mon$_{\pi}$) & from $\phi \lthen \psi$ deduce $\Next_{\pi}\phi \lthen \Next_{\pi}\psi$ & Monotonicity\\
\bottomrule
\end{tabularx}
\end{table}

\begin{theorem}
$\mathsf{DTEL}$ is a sound and complete axiomatization of $\L_{K,\Box,\scriptsize{\Next}}$ with respect to the class of all dynamic topological subset models.
\end{theorem}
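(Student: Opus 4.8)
The plan is to prove soundness by checking the new schemes directly against the semantic clause for $\Next_\pi$, and to prove completeness by grafting canonically defined functions onto the canonical subset-space model that underlies Theorem~\ref{thm:sac}. For soundness, Theorem~\ref{thm:sac} already validates every instance of $\mathsf{EL}_{K,\Box}$, so I would only treat the schemes of Table~\ref{tbl:dtel}, in each case unwinding $(x,U)\models\Next_\pi\phi$ iff $f_\pi(x)$ is defined and $(f_\pi(x),f_\pi(U))\models\phi$ while tracking the definedness guard $\Next_\pi\verum$. Then $(\land\text{-C}_\pi)$ and $(\mathrm{Mon}_\pi)$ are immediate; $(\lnot\text{-PC}_\pi)$ holds because the conjunct $\Next_\pi\verum$ records exactly the definedness of $f_\pi(x)$, which is the only gap between $\lnot\Next_\pi\phi$ and $\Next_\pi\lnot\phi$; $(K\text{-PC}_\pi)$ holds because $f_\pi(U)=\{f_\pi(y): y\in U \text{ and } f_\pi(y) \text{ defined}\}$, so knowing $\phi$ after $\pi$ coincides with knowing the guarded implication $\Next_\pi\verum\lthen\Next_\pi\phi$ before $\pi$; and $(\mathrm{O}_\pi)$ is the single place where openness is used, since the image $f_\pi(W)$ of an open $\Box$-witness $W$ is again open, contains $f_\pi(x)$, and lies inside $f_\pi(U)$, thereby witnessing $\Box\lnot\phi$ after execution.

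For completeness I would work with maximal $\mathsf{DTEL}$-consistent sets (MCSs) and define the canonical functions first. Given a program $\pi$ and an MCS $\Gamma$, set $f_\pi(\Gamma)=\{\phi : \Next_\pi\phi\in\Gamma\}$ when $\Next_\pi\verum\in\Gamma$, and leave $f_\pi(\Gamma)$ undefined otherwise. The schemes of Table~\ref{tbl:dtel} are exactly what is needed to show this set is again an MCS whenever it is defined: $(\mathrm{Mon}_\pi)$ gives closure under provable consequence, $(\land\text{-C}_\pi)$ gives closure under conjunction, and $(\lnot\text{-PC}_\pi)$ together with $\Next_\pi\verum\in\Gamma$ forces exactly one of $\phi,\lnot\phi$ into the set. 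Hence each $f_\pi$ is a well-defined partial function on MCSs, with domain the crash-free sets.

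Next I would overlay the subset-space scaffolding precisely as in the proof of Theorem~\ref{thm:sac}: a canonical topology $\cT^{c}$ read off from $\Box$, a family of admissible information sets read off from $K$, a valuation, and the truth lemma matching satisfaction of $\L_{K,\Box}$-formulas at canonical scenarios with MCS-membership. Two things then remain. First, I would verify that each $f_\pi^{c}$ is open with respect to $\cT^{c}$; this is forced by $(\mathrm{O}_\pi)$, the partial-function analogue of the openness scheme shown in Appendix~\ref{app:seq} to characterize open functions. Second, I would extend the truth lemma to all of $\L_{K,\Box,\scriptsize{\Next}}$ by adding the inductive case for $\Next_\pi$: by the definition of $f_\pi^{c}$, a formula $\Next_\pi\phi$ belongs to a scenario's MCS iff the scenario is crash-free and $\phi$ belongs to the MCS of the $\pi$-image scenario, and the reduction schemes guarantee that this image $(f_\pi^{c}(\Gamma),f_\pi^{c}(\cU))$ is itself a legitimate scenario carrying MCS $f_\pi(\Gamma)$ with the correct $K$-behaviour. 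The induction then closes, and every $\mathsf{DTEL}$-consistent formula is satisfied on a canonical scenario.

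The hard part will be coordinating the functions with the subset-space structure rather than verifying any single axiom. Note first that a PAL-style elimination of $\Next_\pi$ is unavailable: $(\mathrm{O}_\pi)$ supplies only the openness half $\Box\Next_\pi\phi\lthen\Next_\pi\Box\phi$ of the commutation with $\Box$, while the continuity half is deliberately withheld, so $\Next_\pi\Box\phi$ has no $\Next_\pi$-free equivalent and a genuine model construction cannot be sidestepped. The real obstacle is that the canonical image $f_\pi^{c}(\cU)$ of an information set must itself be an admissible open information set, and the point component $f_\pi^{c}(\Gamma)$ and the set component $f_\pi^{c}(\cU)$ must move in lockstep so that the scenario-level clause for $\Next_\pi$---which consults both $f_\pi(x)$ and $f_\pi(U)$---agrees with membership. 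This is exactly the coherence that $(\mathrm{O}_\pi)$ and $(K\text{-PC}_\pi)$ are designed to secure, but threading it through the canonical topology and information structure inherited from Theorem~\ref{thm:sac} is the most delicate step, and is where I would expect the bulk of the work to lie.
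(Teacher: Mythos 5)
Your proposal follows essentially the same route as the paper's own proof: soundness by directly checking the Table~\ref{tbl:dtel} schemes against the partial-function semantics, and completeness via a canonical model whose points are maximal $\mathsf{DTEL}$-consistent sets, with $f_\pi(x)=\{\phi:\Next_\pi\phi\in x\}$ guarded by $\Next_\pi\verum$, openness of the canonical functions secured by (O$_\pi$), and the truth lemma's $\Next_\pi$ case resting on the lockstep coherence $f_\pi([x])=[f_\pi(x)]$ (the paper's Lemma~\ref{lem:kpc}, proved from ($K$-PC$_\pi$) and ($\lnot$-PC$_\pi$)). You correctly identified that coherence between the point component and the information-set component as the crux of the argument, which is exactly where the paper concentrates its work.
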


\begin{proof}
Soundness of $\mathsf{EL}_{K,\Box}$ follows as in the proof given in \cite[Theorem 1]{Bjorndahl18}, while soundness of the additions presented in Table \ref{tbl:dtel} is easy to check. The presence of $\Next_{\pi}\verum$ in ($\lnot$-PC$_{\pi}$) accounts for the fact that $f_{\pi}$ can be partial (since both $\lnot \Next_{\pi} \phi$ and $\lnot \Next_{\pi} \lnot \phi$ are true at states where $f_{\pi}$ is undefined), and plays an analogous role in ($K$-PC$_{\pi}$) and (O$_{\pi}$). Similarly, the usual ``necessitation'' rule for $\Next_{\pi}$ is not valid, since even if $\phi$ is a theorem, $\Next_{\pi} \phi$ still fails at states where $f_{\pi}$ is undefined.

Completeness is proved by a canonical model construction. Let $X$ denote the set of all maximal $\mathsf{DTEL}$-consistent subsets of $\L_{K,\Box,\scriptsize{\Next}}$. Define a binary relation $\sim$ on $X$ by
$$x \sim y \; \dimp \; (\forall \phi \in \L_{K,\Box,\scriptsize{\Next}})(K \phi \in x \dimp K \phi \in y).$$
Clearly $\sim$ is an equivalence relation; let $[x]$ denote the equivalence class of $x$ under $\sim$. For each $x \in X$, define
$$R(x) = \{y \in X \: : \: (\forall \phi \in \L_{K,\Box,\scriptsize{\Next}})(\Box \phi \in x \rimp \phi \in y)\}.$$
Let $\cB = \{R(x) \: : \: x \in X\}$, and let $\cT$ be the topology generated by $\cB$. It is easy to check that $\cB$ is a basis for $\cT$, and each $R(x)$ is a minimal neighbourhood about $x$ (see, e.g., \cite{AvBB03}).
%For each $\phi \in \L_{K,\Box,\scriptsize{\Next}}$, let $\widehat{\phi} \defeq \{x \in X \: : \: \phi \in x\}$. Define
%$$\cB \defeq \big\{\widehat{\Box\phi} \cap [x] \: : \: \phi \in \L_{K,\Box,\scriptsize{\Next}} \textrm{ and } x \in X\big\},$$
%and let $\cT$ be the topology generated by $\cB$. It is easy to check that $\cB$ is a basis for $\cT$.
Given $x \in X$, define
$$
f_{\pi}(x) = \left\{ \begin{array}{ll}
\{\phi \: : \: \Next_{\pi}\phi \in x\} & \; \textrm{if $\Next_{\pi}\verum \in x$}\\
\textrm{undefined} & \; \textrm{otherwise.}
\end{array}
\right.
$$
\shortv{
\vspace{-3mm}
\begin{lemma}
Each $f_{\pi}$ is a partial, open function $X \pto X$.
\end{lemma}

\begin{proof}
Omitted to comply with length restrictions.
\end{proof}
}
\fullv{
The following lemmas establishes that each $f_{\pi}$ is a partial, open function $X \pto X$.

\begin{lemma}
Whenever $f_{\pi}(x)$ is defined, it is maximal $\mathsf{DTEL}$-consistent.
\end{lemma}

\begin{proof}
First suppose for contradiction that $\phi_{1}, \ldots, \phi_{k} \in f_{\pi}(x)$ are inconsistent. Then $\Next_{\pi} \phi_{1}, \ldots, \Next_{\pi}\phi_{k} \in x$, so it follows that $\Next_{\pi} \phi_{1} \land \cdots \land \Next_{\pi}\phi_{k} \in x$, and therefore (using ($\land$-C$_{\pi}$)) $\Next_{\pi}(\phi_{1} \land \cdots \land \phi_{k}) \in x$. Now we know by assumption that
$$\proves_{\mathsf{DTEL}} (\phi_{1} \land \cdots \land \phi_{k}) \lthen \lnot \verum,$$
so by (Mon$_{\pi}$)
$$\proves_{\mathsf{DTEL}} \Next_{\pi}(\phi_{1} \land \cdots \land \phi_{k}) \lthen \Next_{\pi}\lnot \verum,$$
which implies that $\Next_{\pi}\lnot\verum \in x$. Now ($\lnot$-PC$_{\pi}$) implies that $\lnot \Next_{\pi} \verum \land \Next_{\pi} \verum \in x$, contradicting consistency of $x$. This shows that $f_{\pi}(x)$ is consistent.

Next suppose for contradiction that $\phi \notin f_{\pi}(x)$ and $\lnot \phi \notin f_{\pi}(x)$. Then $\Next_{\pi} \phi \notin x$ and $\Next_{\pi} \lnot \phi \notin x$, so $\lnot \Next_{\pi} \phi \in x$ and $\lnot \Next_{\pi} \lnot \phi \in x$. Since $f_{\pi}(x)$ is defined we know that $\Next_{\pi}\verum \in x$, so we have $\lnot \Next_{\pi} \phi \land \Next_{\pi}\verum \in x$ and $\lnot \Next_{\pi} \lnot \phi \land \Next_{\pi}\verum \in x$. Therefore, by ($\lnot$-PC$_{\pi}$), we have $\Next_{\pi} \lnot \phi \in x$ and $\Next_{\pi} \lnot \lnot \phi \in x$. But this implies that $\lnot \phi \in f_{\pi}(x)$ and $\lnot \lnot \phi \in f_{\pi}(x)$, contradicting consistency and thereby establishing maximality. \qed
\end{proof}

\begin{lemma}
For each $x \in X$, $f_{\pi}(R(x)) \in \cT$.
\end{lemma}

\begin{proof}
Let $z \in f_{\pi}(R(x))$ and let $z' \in R(z)$; it suffices to show that $z' \in f_{\pi}(R(x))$. By choice of $z$, there exists $y \in R(x)$ with $z = f_{\pi}(y) = \{\phi \: : \: \Next_{\pi} \phi \in y\}$. We wish to find a $y' \in R(x)$ with $f_{\pi}(y') = z'$. To this end, consider the set
$$\Gamma = \{\psi \: : \: \Box \psi \in x\} \cup \{\Next_{\pi}\chi \: : \: \chi \in z'\}.$$
It is not hard to see that if $y' \supseteq \Gamma$ then $y'$ has the desired properties; by Lindenbaum's lemma, we will therefore be done if we can show that $\Gamma$ is $\mathsf{DTEL}$-consistent.

Suppose not. Then there are $\psi_{1}, \ldots, \psi_{n} \in \{\psi \: : \: \Box \psi \in x\}$ and $\Next_{\pi}\chi_{1}, \ldots, \Next_{\pi}\chi_{m} \in \{\Next_{\pi}\chi \: : \: \chi \in z'\}$ such that
$$\proves_{\mathsf{DTEL}} (\psi_{1} \land \cdots \land \psi_{n}) \lthen \lnot(\Next_{\pi}\chi_{1} \land \cdots \land \Next_{\pi}\chi_{m}),$$
so by ($\land$-C$_{\pi}$),
$$\proves_{\mathsf{DTEL}} (\psi_{1} \land \cdots \land \psi_{n}) \lthen \lnot \Next_{\pi}(\chi_{1} \land \cdots \land \chi_{m}),$$
hence (using $\mathsf{S4}_{\Box}$)
$$\proves_{\mathsf{DTEL}} (\Box \psi_{1} \land \cdots \land \Box \psi_{n}) \lthen \Box \lnot \Next_{\pi}(\chi_{1} \land \cdots \land \chi_{m}).$$
Now each $\Box \psi_{i}$ is contained in $x$, and therefore also in $y$; it follows that $\Box \lnot \Next_{\pi}(\chi_{1} \land \cdots \land \chi_{m}) \in y$ as well; moreover, since $f_{\pi}(y) = z$, we have $\Next_{\pi}\verum \in y$, so by (O$_{\pi}$) we obtain $\Next_{\pi}\Box \lnot (\chi_{1} \land \cdots \land \chi_{m}) \in y$ and therefore $\Box \lnot (\chi_{1} \land \cdots \land \chi_{m}) \in z$, hence $\lnot (\chi_{1} \land \cdots \land \chi_{m}) \in z'$, a contradiction. \qed
\end{proof}
}

For each $p \in \textsc{prop}$, set
%$v(p) \defeq \widehat{p}$.
$v(p) \defeq \{x \in X \: : \: p \in x\}$.
Let $\X = (X, \cT, \{f_{\pi}\}_{\pi \in \Pi}, v)$. Clearly $\X$ is a dynamic topological subset model.

\begin{lemma}[Truth Lemma] \label{lem:tru}
For every $\phi \in \L_{K,\Box,\scriptsize{\Next}}$, for all $x \in X$, $\phi \in x$ iff $(\X, x, [x]) \models \phi$.
\end{lemma}

\shortv{
\begin{proof}
Omitted to comply with length restrictions.
\end{proof}
}

\fullv{
\begin{proof}
First we show that $y \in R(x) \rimp y \sim x$. Let $y \in R(x)$. If $K\psi \in x$ then $KK\psi \in x$, so $\Box K \psi \in x$ and therefore $K\psi \in y$; conversely, if $\lnot K\psi \in x$ then $K\lnot K\psi \in x$, so $\Box \lnot K\psi \in x$ and therefore $\lnot K\psi \in y$. Note that this implies that $[x] \in \cT$, so $(x, [x])$ is indeed an epistemic scenario.

The proof proceeds by induction on the structure of $\phi$. The base case holds by definition of $v$, and the inductive steps for the Boolean connectives are straightforward. The inductive step for $K$ mirrors exactly the corresponding step in \cite[Theorem 1]{Bjorndahl18}.

So suppose inductively the result holds for $\phi$ and let us show it holds for $\Box \phi$. If $\Box \phi \in x$ then by definition of $R$ we know that for every $y \in R(x)$, $\phi \in y$. By the inductive hypothesis, this implies that $(\forall y \in R(x))(y,[y]) \models \phi$; since $y \in R(x) \rimp y \sim x$, this is equivalent to $(\forall y \in R(x))((y,[x]) \models \phi)$; since $R(x)$ is an open neighbourhood of $x$, this yields $(x,[x]) \models \Box \phi$.

For the converse, suppose that $\Box \phi \notin x$. Then
$$\{\psi \: : \: \Box \psi \in x\} \cup \{\lnot \phi\}$$
is consistent, for if not there are $\psi_{1}, \ldots, \psi_{m} \in \{\psi \: : \: \Box \psi \in x\}$ such that
$$\proves_{\mathsf{DTEL}} \psi_{1} \land \cdots \land \psi_{m} \lthen \phi,$$
from which it follows that
$$\proves_{\mathsf{DTEL}} \Box\psi_{1} \land \cdots \land \Box\psi_{m} \lthen \Box \phi,$$
which implies $\Box \phi \in x$, a contradiction. By Lindenbaum's lemma, we therefore obtain a point $y \in X$ with $y \in R(x)$ and $\phi \notin y$. This latter fact, by the inductive hypothesis, yields $(y,[y]) \not\models \phi$ and thus $(y,[x]) \not \models \phi$ (since $y \sim x$), which in turn yields $(x,[x]) \not\models \Box\phi$ since $R(x)$ is a minimal neighbourhood of $x$.

For the last inductive step a lemma will be useful.

\begin{lemma} \label{lem:kpc}
For all $x \in X$, if $f_{\pi}(x)$ is defined then $f_{\pi}([x]) = [f_{\pi}(x)]$.
\end{lemma}

\begin{proof}
First suppose that $z \in f_{\pi}([x])$, so there is some $y \sim x$ such that $f_{\pi}(y) = z$. We need to show that $z \sim f_{\pi}(x)$. If $K\psi \in z$ then so is $KK\psi$, so $\Next_{\pi}KK\psi \in y$; since $f_{\pi}(y)$ is defined, certainly $\Next_{\pi}\verum \in y$, and therefore by ($K$-PC$_{\pi}$) we can deduce that $K(\Next_{\pi}\verum \lthen \Next_{\pi}K\psi) \in y$. Since $y \sim x$, this implies that $\Next_{\pi}\verum \lthen \Next_{\pi}K\psi \in x$, and therefore (since $f_{\pi}(x)$ is defined), $\Next_{\pi}K\psi \in x$, hence $K\psi \in f_{\pi}(x)$. An analogous argument shows that if $\lnot K\psi \in z$ then $\lnot K \psi \in f_{\pi}(x)$, which establishes that $z \sim f_{\pi}(x)$.

Conversely, suppose that $z \sim f_{\pi}(x)$; we wish to show that there exists a $y \in [x]$ such that $f_{\pi}(y) = z$. Using Lindenbaum's lemma, we will be done if we can show that the set
$$\{K\psi \: : \: K \psi \in x\} \cup \{\Next_{\pi}\chi \: : \: \chi \in z\}$$
is $\mathsf{DTEL}$-consistent. So suppose not; then there are $K\psi_{1}, \ldots, K\psi_{n} \in \{K\psi \: : \: K \psi \in x\}$ and $\Next_{\pi}\chi_{1}, \ldots, \Next_{\pi}\chi_{m}, \Next_{\pi}\verum \in \{\Next_{\pi}\chi \: : \: \chi \in z\}$ such that
$$\proves_{\mathsf{DTEL}} (K\psi_{1} \land \cdots \land K\psi_{n}) \lthen (\Next_{\pi}\verum \lthen \lnot(\Next_{\pi}\chi_{1} \land \cdots \land \Next_{\pi}\chi_{m})),$$
from which it follows (using $\mathsf{S5}_{K}$) that
$$\proves_{\mathsf{DTEL}} (K\psi_{1} \land \cdots \land K\psi_{n}) \lthen K(\Next_{\pi}\verum \lthen \lnot\Next_{\pi}(\chi_{1} \land \cdots \land \chi_{m})).$$
Observe that $\Next_{\pi}\verum \lthen \lnot \Next_{\pi} \chi$ is propositionally equivalent to $\Next_{\pi}\verum \lthen (\lnot \Next_{\pi} \chi \land \Next_{\pi}\verum)$, and by ($\lnot$-PC$_{\pi}$) this is in turn equivalent in $\mathsf{DTEL}$ to $\Next_{\pi}\verum \lthen \Next_{\pi} \lnot \chi$. Hence
$$\proves_{\mathsf{DTEL}} (K\psi_{1} \land \cdots \land K\psi_{n}) \lthen K(\Next_{\pi}\verum \lthen \Next_{\pi} \lnot (\chi_{1} \land \cdots \land \chi_{m})),$$
so since $K\psi_{1}, \ldots, K\psi_{n} \in x$, it follows that $K(\Next_{\pi}\verum \lthen \Next_{\pi} \lnot (\chi_{1} \land \cdots \land \chi_{m})) \in x$. Since $f_{\pi}(x)$ is defined, also $\Next_{\pi}\verum \in x$, so by ($K$-PC$_{\pi}$) we can deduce that $\Next_{\pi} K \lnot (\chi_{1} \land \cdots \land \chi_{m}) \in x$. We therefore have $K \lnot (\chi_{1} \land \cdots \land \chi_{m}) \in f_{\pi}(x)$, so since $z \sim f_{\pi}(x)$, $\lnot (\chi_{1} \land \cdots \land \chi_{m}) \in z$, a contradiction.
\qed
\end{proof}

Back to the Truth Lemma: suppose inductively the result holds for $\phi$ and let us show it holds for $\Next_{\pi} \phi$. Observe that $\Next_{\pi} \phi \in x$ iff $f_{\pi}(x)$ is defined and $\phi \in f_{\pi}(x)$. By the inductive hypothesis, this is equivalent to $(f_{\pi}(x), [f_{\pi}(x)]) \models \phi$; by Lemma \ref{lem:kpc}, this is in turn equivalent to $(f_{\pi}(x), f_{\pi}[x]) \models \phi$, which by definition holds iff $(x,[x]) \models \Next_{\pi} \phi$.
\qed
\end{proof}
}

%commentout
\commentout{
If $K \phi \in x$, then by definition of $\sim$ we know that $(\forall y \in [x])(K \phi \in y)$. But $K \phi \in y \rimp \phi \in y$, so $(\forall y \in [x])(\phi \in y)$, which by the inductive hypothesis implies that $(\forall y \in [x])((y,[y]) \models \phi)$. Since $[y] = [x]$, this is equivalent to $(\forall y \in [x])((y,[x]) \models \phi)$, which yields $(x,[x]) \models K \phi$.

For the converse, suppose that $K \phi \notin x$. Then $\{K \psi \: : \: K \psi \in x\} \cup \{\lnot \phi\}$ is consistent, for if not there is a finite subset $\Gamma \subseteq \{K \psi \: : \: K \psi \in x\}$ such that
$$\proves_{\mathsf{DTEL}} \bigwedge_{\chi \in \Gamma} \chi \lthen \phi,$$
from which it follows (using $\textsf{S5}_{K}$) that
$$\proves_{\mathsf{DTEL}} \bigwedge_{\chi \in \Gamma} \chi \lthen K \phi,$$
which implies $K \phi \in x$, a contradiction. Therefore, we can extend $\{K \psi \: : \: K \psi \in x\} \cup \{\lnot \phi\}$ to some $y \in X$; by construction, we have $y \in [x]$ and $\phi \notin y$. This latter fact, by the inductive hypothesis, yields $(y,[y]) \not\models \phi$ and thus $(y,[x]) \not \models \phi$ (since $[x] = [y]$), whence $(x,[x]) \not\models K\phi$.

Now let us suppose that the result holds for $\phi$ and work to show that it also must hold for $\Box(\phi)$. If $\Box(\phi) \in x$, then observe that
$$x \in \widehat{\Box(\phi)} \cap [x] \subseteq \{y \in [x] \: : \: \phi \in y\};$$
this is an easy consequence of the fact that $\proves \Box(\phi) \lthen \phi$. Since $\widehat{\Box(\phi)} \cap [x]$ is open, it follows that
\begin{equation} \label{eqn:sin}
x \in \int(\{y \in [x] \: : \: \phi \in y\}).
\end{equation}
Now by the inductive hypothesis we have
\begin{eqnarray*}
\{y \in [x] \: : \: \phi \in y\} & = & \{y \in [x] \: : \: (y,[y]) \models \phi\}\\
& = & \{y \in [x] \: : \: (y,[x]) \models \phi\}\\
& = & \val{\phi}^{[x]},
\end{eqnarray*}
which by (\ref{eqn:sin}) yields $x \in \int \val{\phi}^{[x]}$, so $(x,[x]) \models \Box(\phi)$.

For the converse, suppose that $(x,[x]) \models \Box(\phi)$. Then $x \in \int \val{\phi}^{[x]}$ which, as above, is equivalent to $x \in \int(\{y \in [x] \: : \: \phi \in y\})$. It follows that there is some basic open set $\widehat{\Box(\psi)} \cap [z]$ such that
$$x \in \widehat{\Box(\psi)} \cap [z] \subseteq \{y \in [x] \: : \: \phi \in y\};$$
of course, in this case it must be that $[z] = [x]$. This implies that for all $y \in [x]$, if $\Box(\psi) \in y$ then $\phi \in y$. From this we can deduce that
$$\{K \psi' \: : \: K \psi' \in x\} \cup \{\lnot(\Box(\psi) \lthen \phi)\}$$
is inconsistent, for if not it could be extended to a $y \in [x]$ with $\Box(\psi) \in y$ but $\phi \notin y$, a contradiction. Thus, we can find a finite subset $\Gamma \subseteq \{K \psi' \: : \: K \psi' \in x\}$ such that
$$\proves \bigwedge_{\chi \in \Gamma} \chi \lthen (\Box(\psi) \lthen \phi),$$
which implies (using $\mathsf{S5}_{K}$) that
$$\proves \bigwedge_{\chi \in \Gamma} \chi \lthen K(\Box(\psi) \lthen \phi).$$
This implies that $K(\Box(\psi) \lthen \phi) \in x$, so by $\mathbf{(KI)}$ we know also that $\Box(\Box(\psi) \lthen \phi) \in x$, from which it follows (using $\mathsf{S4}_{\Box}$) that $\Box(\psi) \lthen \Box(\phi) \in x$. Since $x \in \widehat{\Box(\psi)}$, we conclude that $\Box(\phi) \in x$, as desired.
}
%end-commentout

Completeness is an easy consequence: if $\phi$ is not a theorem of $\mathsf{DTEL}$, then $\{\lnot \phi\}$ is consistent and so can be extended by Lindenbaum's lemma to some $x \in X$; by Lemma \ref{lem:tru}, we have $(\X,x,[x]) \not\models \phi$.
\qed
\end{proof}

\commentout{
%partial functions
\subsection{Partial functions}

\begin{customthm}{\ref{thm:pdtm}}
\textsf{PDL$_{0}$} is a sound and complete axiomatization of the language $\L_{PDL}$ with respect to the class of all partial dynamic topological models.
\end{customthm}

\begin{proof}
Soundness of (CPL) and (MP) is immediate. Soundness of (Nec$_{\pi}$) follows from the fact that if $\val{\phi} = X$, then
\begin{eqnarray*}
x \models [\pi]\phi & \textrm{iff} & x \in \int(f_{\pi}^{-1}(X) \cup C_{\pi})\\
& \textrm{iff} & x \in \int((X \mysetminus C_{\pi}) \cup C_{\pi})\\
& \textrm{iff} & x \in \int(X)\\
& \textrm{iff} & x \in X.
\end{eqnarray*}
Lastly, to see that (K$_{\pi}$) is sound, observe that
\begin{eqnarray*}
\val{[\pi](\phi \lthen \psi)} \cap \val{[\pi]\phi} & = & \int(f_{\pi}^{-1}(\val{\phi \lthen \psi}) \cup C_{\pi}) \cap \int(f_{\pi}^{-1}(\val{\phi}) \cup C_{\pi})\\
& = & \int((f_{\pi}^{-1}(\val{\phi \lthen \psi}) \cap f_{\pi}^{-1}(\val{\phi})) \cup C_{\pi})\\
& \subseteq & \int(f_{\pi}^{-1}(\val{\psi}) \cup C_{\pi})\\
& = & \val{[\pi]\psi}.
\end{eqnarray*}

The proof of completeness follows exactly the same strategy as before; we simply need to extend our previous model transformation result to show that every PDL model can be transformed into a \textit{partial} dynamic topological model in a way that preserves the truth of all formulas in $\L_{PDL}$. And this is easy: 

\draft{to do}

\end{proof}
}

\end{document}